\newtheorem{theorem}{Theorem}
\newtheorem*{theorem*}{Theorem}
\newtheorem*{conjecture*}{Conjecture}
\newtheorem*{remark*}{Remark}
\newtheorem{lemma}[theorem]{Lemma}
\newtheorem*{lemma*}{Lemma}
\newtheorem{definition}[theorem]{Definition}
\newtheorem*{definition*}{Definition}
\newtheorem{proposition}[theorem]{Proposition}
\newtheorem*{proposition*}{Proposition}
\newtheorem{corollary}[theorem]{Corollary}
\newtheorem*{corollary*}{Corollary}
\newtheorem*{property*}{Property}
\newtheorem*{lem2shen*}{Lemma 2 in~\cite{shen_1999_finding}}
\DeclareMathOperator{\TSP}{TSP}
\DeclareMathOperator{\MST}{MST}
\DeclareMathOperator{\conv}{conv}
\DeclareMathOperator{\OPT}{OPT}
\DeclareMathOperator{\val}{val}
\DeclareMathOperator{\argmax}{argmax}
\def\b1{{\bf 1}}
\title{An $O(1)$-Approximation for Minimum
Spanning Tree Interdiction}
\author{
Rico Zenklusen%
\thanks{
Department of Mathematics, ETH Zurich, Zurich, Switzerland,
and Department of Applied Mathematics and Statistics,
Johns Hopkins University, Baltimore, USA.
Email:
\href{mailto:ricoz@math.ethz.ch}{\tt ricoz@math.ethz.ch}.
}
}
\begin{document}

\maketitle

\begin{abstract}
Network interdiction problems are a natural way
to study the sensitivity of a network optimization
problem with respect to the removal of a limited
set of edges or vertices.
One of the oldest and best-studied interdiction
problems is minimum spanning tree (MST) interdiction.
Here, an undirected multigraph
with nonnegative edge weights
and positive interdiction costs on its edges is
given, together with a positive budget $B$.
The goal is to find a subset of edges $R$, whose total
interdiction cost does not exceed $B$, such that
removing $R$ leads to a graph where the
weight of an MST is as large as possible.
Frederickson and Solis-Oba (SODA 1996) presented
an $O(\log m)$-approximation for MST interdiction,
where $m$ is the number of edges. Since then,
no further progress has been made regarding
approximations, 
and the question 
whether MST interdiction admits an $O(1)$-approximation
remained open.

We answer this question 
in the affirmative, by presenting
a $14$-approximation that overcomes two main
hurdles that hindered further progress so far.
Moreover, based on a well-known $2$-approximation for
the metric traveling salesman problem (TSP),
we show that our $O(1)$-approximation for MST
interdiction implies an $O(1)$-approximation for
a natural interdiction version of metric TSP.

\end{abstract}

\medskip
\noindent
{\small \textbf{Keywords:}
approximation algorithms, combinatorial optimization,
interdiction problems, minimum spanning trees,
submodular functions
}

\thispagestyle{empty}

\newpage

\setcounter{page}{1}

\section{Introduction}\label{sec:intro}

Network interdiction studies the sensitivity of a network
optimization problem with respect to the removal of some limited
set of its edges or vertices. 
For example, in the minimum spanning tree (MST) interdiction
problem,
we are given an undirected loopless
multigraph $G=(V,E)$ with nonnegative
edge weights $w:E\rightarrow \mathbb{Z}_{\geq 0}$, positive
edge interdiction costs $c:E\rightarrow \mathbb{Z}_{>0}$, and an
interdiction budget $B\in \mathbb{Z}_{>0}$. The goal is to remove
a subset of edges whose total interdiction cost is bounded by
$B$, and such that the weight of an MST in the graph on the
non-removed edges is as large as possible.
To avoid trivial cases, we assume that the budget is not
large enough to disconnect the graph.
Along the same lines, interdiction problems have
been considered for a wide variety of other underlying network
optimization problems,
including maximum $s$-$t$ flows, maximum matchings, shortest
paths, maximum edge-connectivity, and
maximum stable sets (see Section~\ref{subsec:relatedWork}
for references and some further details).
As highlighted in the example of interdicting MSTs,
interdiction problems can naturally be interpreted as
two-player problems, where an \emph{interdictor}
first removes edges and plays against an \emph{operator},
who solves an optimization problem over the remaining network.

Interdiction problems allow for identifying weak spots in
a networked system that may be worth reinforcing, or to
obtain strategies to interdict an optimization problem
that describes an undesirable process on a network%
.
Therefore, interdiction problems
have found applications in a wide variety
of areas, including preventing the spread of infections
in hospitals~\cite{assimakopoulos_1987_network},
inhibiting the production and distribution of illegal
drugs~\cite{wood_1993_deterministic},
prevention of nuclear arms smuggling~\cite{morton_2007_models},
military planning~\cite{ghare_1971_optimal}, 
and infrastructure
protection~\cite{salmeron_2009_worst,church_2004_identifying}.
Even the discovery of the Max-Flow/Min-Cut Theorem
was motivated by a Cold War plan to interdict the Soviet
rail network in Eastern Europe~\cite{schrijver_2002_history}.

Considerable effort has also been spent in getting a better
theoretical understanding of interdiction problems. However,
large gaps remain. This is especially true
regarding their approximability,
which is of particular interest since almost all known interdiction
problems are easily shown to be NP-hard.
One of the oldest and most-studied interdiction problems,
for which a large gap in terms of approximability exists,
is MST interdiction, which is the focus of this paper.
It captures well-known graph optimization problems,
like the
\emph{maximum components problem}
(MCP)~\cite{frederickson_1999_increasingC},
which asks to break a graph into as many connected
components
as possible by removing a given number $q$ of edges.
Also the generalization of MCP with interdiction
costs on the edges and an interdiction budget $B$, which
was studied in~\cite{engelberg_2007_cut} and called 
the \emph{budgeted graph disconnection} (BGD)
problem, remains
a special case of MST interdiction.
Notice the close relation between MCP
and the $k$-cut problem~\cite{goldschmidt_1994_polynomial},
where the roles of objective and budget are exchanged.
In particular, as observed
in~\cite{frederickson_1999_increasingC},
this connection to the $k$-cut problem
immediately implies strong NP-hardness of MCP,
and therefore also of MST interdiction.
\ifbool{shortVersion}{%
\footnote{%
}%
}{%
For completeness,
we briefly discuss this connection
in Appendix~\ref{app:graphPart}.
}
Another motivation for studying MST interdiction is that
MSTs are often used as building blocks in
other optimization problems or approximation algorithms.
Results on MST interdiction 
therefore have the potential to be carried over to
further interesting problem settings.
In particular, we
exploit the well-known property that the weight of
an MST is within a factor of $2$ of the shortest tour
for the metric traveling
salesman problem (TSP), to transform approximation
results on MST interdiction to metric TSP interdiction.

In 1996, Frederickson and
Solis-Oba~\cite{frederickson_1999_increasingC} presented an
$O(\log m)$-approximation for MST interdiction,
where $m=|E|$ is the number of edges.
No improvement on the approximation ratio has been
obtained since.
We highlight that parallel edges are allowed in
the MST interdiction problem, and we
thus may have $\log m = \omega(\log n)$, where $n=|V|$.
Admitting parallel edges is of particular interest
in MST interdiction and also other interdiction problems,
since they allow for modeling effects like partial
destruction of a connection between two vertices.
Hence, so far, no approximation algorithm for MST
interdiction is known with an approximation factor
that is polylogarithmic in $n$.

A special case of MST interdiction, which received
considerably attention, is
the \emph{$k$ most vital edges problem}, which
asks to remove $k$ edges to obtain a graph whose
MST has a weight as large as possible. Hence, this
corresponds to MST interdiction with unit interdiction
costs and budget $B=k$.
From an approximation point of view,
the best known procedure is as well the algorithm
of Frederickson and Solis-Oba. However, 
for the $k$ most vital edges problem this algorithm
is known to be an
$O(\log k)$-approximation~\cite{frederickson_1999_increasingC}.
Interest arose in obtaining fast polynomial algorithms
for $k=O(1)$.
In particular, the \emph{most vital edge problem},
which corresponds to $k=1$, is
closely related to the \emph{sensitivity analysis problem}
for MSTs, as observed in~\cite{iwano_1993_efficient}. 
In the sensitivity analysis problem one is given an
edge-weighted graph $G=(V,E)$ and an MST $T\subseteq E$
in $G$. The goal
is to determine for every edge by how much its weight
has to be changed so that $T$ is not anymore an MST.
Clearly, any algorithm to find an MST combined with
an algorithm for the sensitivity analysis problem
leads to an algorithm to solve the most vital edge
problem.
Using this observation leads to the currently fastest
algorithms for the most vital edge problem, beating
the strongest specialized approaches known
previously~\cite{hsu_1991_finding}.
In particular, a deterministic
$O(m \cdot \alpha(m,n))$ time algorithm for the most
vital edge problem is obtained---where
$\alpha(m,n)$ is the inverse Ackermann function---by
combining Chazelle's~\cite{chazelle_2000_minimum}
$O(m \cdot \alpha(m,n))$ MST
algorithm with Tarjan's~\cite{tarjan_1979_applications}
$O(m \cdot \alpha(m,n))$ algorithm for the sensitivity analysis
problem.
Moreover, a randomized $O(m)$ time algorithm is obtained
for the most vital edge problem by
combining an $O(m)$ randomized MST algorithm---like
the original algorithm of Klein and Tarjan~\cite{klein_1994_randomized}
or a revised version presented by
Karger et
al.~\cite{karger_1995_randomizedLinear-Time}---with
a randomized $O(m)$ time algorithm by Dixon et
al.~\cite{dixon_1992_verification}
for the sensitivity analysis problem.\footnote{We highlight
that a simpler randomized $O(m)$ time algorithm for
the sensitivity analysis problem was later obtained
by King~\cite{king_1997_simpler}.}
Pettie~\cite{pettie_2005_sensitivity}
presented an even faster
deterministic $O(m\cdot \log \alpha(m,n))$ time
algorithm for the sensitivity analysis problem. However, this
does not lead to improvements for currently fastest deterministic
algorithms for the most vital edge problem because no
deterministic method is known to find an MST faster than in
$O(m\cdot \alpha(m,n))$ time.
Several exponential-time algorithms have been suggested
for the $k$ most vital edges problem for
general
$k$, including parallel
algorithms~\cite{liang_1997_finding,liang_2001_finding,bazgan_2012_efficient}.
The problem has also been considered under the aspect
of parameterized complexity~\cite{guo_2014_parameterized}.

Our focus on MST interdiction lies on approximation algorithms.
From an approximation point of view, the central open
question within MST interdiction is whether it is possible
to obtain an $O(1)$-approximation.
The main contribution of this paper is to answer this
question in the affirmative.
As a direct consequence thereof, we obtain
an $O(1)$-approximation for a natural interdiction version
of metric TSP.

\subsection{Our results and techniques}

Our main result is the first $O(1)$-approximation
for MST interdiction, improving on Frederickson
and Solis-Oba's
$O(\log m)$-approximation~\cite{frederickson_1999_increasingC}.

\begin{theorem}\label{thm:MSTmain}
There is a $14$-approximation for MST interdiction.
\end{theorem}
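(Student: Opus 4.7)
The plan is to exploit the well-known \emph{threshold decomposition} of the MST weight and recast MST interdiction as an approximable submodular optimization problem. For integer edge weights, letting $H_{<t}$ denote the spanning subgraph of $H$ whose edges have weight strictly less than $t$, one has
\[
\MST(G\setminus R)\;=\;\sum_{t\ge 1}\bigl(\comp((G\setminus R)_{<t})-1\bigr),
\]
so MST interdiction amounts to selecting $R$ with $c(R)\le B$ that maximizes the sum, across all weight levels, of the number of connected components of $G\setminus R$ at that level.

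I would enumerate a polynomial set of candidate ``critical thresholds'' $\tau$ (for instance, the distinct edge weights) and, for each one, compute an interdiction set $R_\tau$ tailored to $\tau$. A level-averaging argument shows that for at least one $\tau$, the contribution to $\MST(G\setminus R^\star)$ coming from levels near $\tau$ is within a constant factor of the true optimum, so it suffices, for the right $\tau$, to approximately maximize the number of components created in $G_{<\tau}$ by removing an edge set of cost at most $B$. Next I would cast this $\tau$-indexed subproblem in terms of a surrogate submodular function. Because $R\mapsto\comp(G_{<\tau}\setminus R)$ is supermodular (it equals $|V|$ minus the rank of a restriction of the graphic matroid, and matroid rank is submodular), it is not directly amenable to submodular maximization; instead one builds a monotone submodular surrogate $f_\tau$ on an appropriate ground set of interdiction options whose value is within a constant factor of the true component-increase for every $R$. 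Applying a standard constant-factor algorithm for monotone submodular maximization under a knapsack constraint (e.g., Sviridenko's $(1-1/e)$ guarantee, or a partial-enumeration plus greedy variant) then yields $R_\tau$ with $c(R_\tau)\le B$ and $f_\tau(R_\tau)$ close to optimal. Returning the best $R_\tau$ over all guesses and comparing the result to $\OPT$ via the threshold decomposition gives a constant-factor approximation, which, after tuning the constants in each step, one hopes to push down to the claimed $14$.

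The principal obstacle, and presumably the source of the two hurdles mentioned in the abstract, is constructing the surrogate $f_\tau$ so that it simultaneously (i) captures the supermodular component-increase function up to a small multiplicative loss, (ii) accounts correctly for the interaction between interdicted edges of weight $<\tau$ (which carve $G_{<\tau}$ into components) and of weight $\ge\tau$ (which alter the edges available to reconnect pieces at higher levels), and (iii) is insensitive to parallel edges---this last point being exactly what forces the $\log m$ rather than $\log n$ dependence in the Frederickson--Solis-Oba approach. Engineering $f_\tau$ so that all three conditions hold at once while losing only a constant factor against the true objective, and then bounding the errors accumulated across the threshold enumeration, is the crux of the proof.
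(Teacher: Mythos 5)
Your decomposition of the MST weight into per-level component counts is indeed the paper's starting point, but the step where you reduce to a single threshold $\tau$ is precisely where the argument breaks, and it is the reason the Frederickson--Solis-Oba algorithm is stuck at $O(\log m)$ rather than $O(1)$. Even after rounding weights to powers of two there remain $\Theta(\log n)$ relevant levels, and the optimum's value can be spread in equal shares over all of them (e.g., $\sigma(E_{\leq i}\setminus R^\star)-1\approx 2^{p-i}$ makes every level contribute $2^p$), so the best single level --- and hence the best single-threshold subproblem, however well you solve it --- captures only a $1/\Theta(\log n)$ fraction of $\OPT$. No ``level-averaging'' can rescue this: committing to one $\tau$ and maximizing components only in $G_{<\tau}$ is exactly the reduction to budgeted graph disconnection that the paper identifies as the obstacle one must get past. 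Separately, the ``monotone submodular surrogate $f_\tau$ within a constant factor of the supermodular component-increase function'' is asserted but never constructed; no such surrogate is known, and even the single-level problem is handled in the literature by bespoke cut-based arguments rather than by black-box submodular maximization under a knapsack constraint.

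The paper takes an essentially orthogonal route that never isolates a level. It runs parametric submodular \emph{minimization} on $U\mapsto \lambda c(U)-\val(U)$ (which is submodular since $\val$ is supermodular) to compute the extreme supported points of the Pareto front in $(c(U),\val(U))$-space; from two consecutive such points bracketing the budget one gets an upper bound $\nu^*\geq\OPT$ and an over-budget set $U$ whose efficiency $\val(U)/c(U)$ is essentially at least $\nu^*/B$. The heart of the proof is then a trimming procedure: a greedy construction of a ``removal pattern'' that descends through the levels, repeatedly selecting connected components of $(V,E_{\leq i}\setminus U)$ in decreasing order of an auxiliary efficiency $\rho_i$ that aggregates impact over \emph{all} levels up to $i$ at once, and keeps only the edges of $U$ crossing the selected components. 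This extracts a within-budget $R\subseteq U$ with $\val(R)\geq \tfrac12 B\,\val(U)/c(U)-2^{p+1}$, i.e., $R$ inherits the multi-level efficiency of $U$ rather than the contribution of any single level. You would need a mechanism of this kind --- one that preserves value across all levels simultaneously when cutting down to budget --- to get past the logarithmic barrier; as written, your outline cannot yield a constant factor.
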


MSTs are a useful tool in approximation algorithms for
other combinatorial optimization problems, like metric
TSP. Due to this link, we can use the above result
as a black-box to obtain
an $O(1)$-approximation for a natural interdiction version
of metric TSP.
In metric TSP, a complete graph is given
with lengths on the edges that 
satisfy the triangle inequality, and the task is
to find a shortest Hamiltonian cycle.
Metric TSP often stems from 
settings where a graph $G=(V,E)$ with
edge lengths $\ell:E\rightarrow \mathbb{Z}_{>0}$ is given,
and the goal is to find a shortest closed walk that
visits every vertex \emph{at least} once.
Such settings easily translate to metric TSP 
by considering a complete graph $\overline{G}=(V,\overline{E})$
over $V$
such that to every edge $\{u,v\}\in\overline{E}$
the distance $d(\{u,v\})$
is assigned, where $d(\{u,v\})$ is the length of a shortest
$u$-$v$ path in $G$.
A natural interdiction version is obtained by considering
interdiction costs $c:E\rightarrow \mathbb{Z}_{>0}$ in $G$
and a budget $B\in \mathbb{Z}_{>0}$;
the task is to find a subset of edges
$R\subseteq E$ such that the shortest closed walk in
$(V,E\setminus R)$ that visits each vertex at least once
is as large as possible. For brevity, we call this
problem \emph{metric TSP interdiction}.
Combining Theorem~\ref{thm:MSTmain} with a well-known
$2$-approximation for metric TSP that
is based on MSTs, we obtain:

\begin{theorem}\label{thm:TSPmain}
Metric TSP interdiction admits a $28$-approximation.
\end{theorem}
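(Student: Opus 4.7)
The plan is to use the $14$-approximation for MST interdiction from Theorem~\ref{thm:MSTmain} as a black box, combined with the classical factor-$2$ relationship between MSTs and shortest Hamiltonian closed walks in a connected edge-weighted graph. Concretely, given the metric TSP interdiction instance on $G=(V,E)$ with lengths $\ell$, costs $c$, and budget $B$, I would simply run the MST interdiction algorithm on the same graph, treating $\ell$ as the edge-weight function and keeping $c$ and $B$ unchanged, and output the resulting edge set $R^*$.

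Before invoking this, I would establish the two standard bounds that, for every $F\subseteq E$ making $(V,F)$ connected, relate $\MST(V,F)$ (the minimum weight of a spanning tree under $\ell$) to $\mathrm{CW}(V,F)$ (the shortest length of a closed walk in $(V,F)$ visiting every vertex at least once). The lower bound $\mathrm{CW}(V,F)\ge \MST(V,F)$ follows because the distinct edges used by any such walk form a connected spanning subgraph, hence contain a spanning tree, and the walk length already pays for those edges with multiplicity at least one. The upper bound $\mathrm{CW}(V,F)\le 2\,\MST(V,F)$ follows by doubling the edges of an MST to obtain an Eulerian multigraph and taking any Eulerian circuit as a closed walk visiting all vertices. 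Both arguments rely only on nonnegativity of $\ell$ and connectedness of $(V,F)$, the latter being guaranteed for every feasible $R$ by the standing assumption that $B$ cannot disconnect the graph.

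With these in hand, I would chain the estimates. Let $\OPT_{\mathrm{MST}}$ and $\OPT_{\mathrm{TSP}}$ denote the optima of the two interdiction problems, and let $R_{\mathrm{TSP}}$ be an optimal solution for metric TSP interdiction. Applying the upper bound to $F = E\setminus R_{\mathrm{TSP}}$ and using that $R_{\mathrm{TSP}}$ is feasible for MST interdiction gives
\[
\OPT_{\mathrm{TSP}} = \mathrm{CW}(V,E\setminus R_{\mathrm{TSP}}) \le 2\,\MST(V,E\setminus R_{\mathrm{TSP}}) \le 2\,\OPT_{\mathrm{MST}}.
\]
On the other hand, $R^*$ returned by Theorem~\ref{thm:MSTmain} satisfies $c(R^*)\le B$ and $\MST(V,E\setminus R^*)\ge \OPT_{\mathrm{MST}}/14$, and applying the lower bound to $F = E\setminus R^*$ yields $\mathrm{CW}(V,E\setminus R^*) \ge \MST(V,E\setminus R^*)\ge \OPT_{\mathrm{MST}}/14 \ge \OPT_{\mathrm{TSP}}/28$. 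Hence $R^*$ is a $28$-approximate metric TSP interdiction solution, and the overall running time is dominated by that of the MST interdiction routine.

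There is no substantial obstacle here beyond careful bookkeeping: the forward inequality $\mathrm{CW}\ge \MST$ is applied at our algorithm's output $R^*$, whereas the reverse inequality $\mathrm{CW}\le 2\,\MST$ is applied at the TSP interdiction optimum $R_{\mathrm{TSP}}$, and the two factors of $14$ and $2$ multiply to give $28$. The only modelling point worth double-checking is that the metric TSP interdiction objective refers to shortest closed walks in the non-complete graph $G$ rather than tours in its metric closure, which is precisely the setting in which the MST-based bounds above are valid.
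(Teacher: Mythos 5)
Your proposal is correct and follows essentially the same route as the paper: run the $14$-approximation for MST interdiction on $G$ with weights $\ell$, and combine the two standard bounds $\MST(U)\le\TSP(U)\le 2\MST(U)$ (the latter via tree doubling), chaining through the MST interdiction optimum exactly as in the paper's Theorem~\ref{thm:TSPtoMST}. The minor reorganization of the inequality chain (bounding $\OPT_{\mathrm{TSP}}\le 2\,\OPT_{\mathrm{MST}}$ first rather than inserting $\MST(R^*_{\MST})\ge\MST(R^*_{\TSP})$ in the middle) is immaterial.
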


To obtain our main result, Theorem~\ref{thm:MSTmain},
we overcome two main hurdles
for obtaining $O(1)$-approximations
for MST interdiction.
First, it is hard to find a good upper bound for
MST interdiction. In particular, no strong LP
relaxations are known.
We note that even for the related $k$-cut problem and
variants of it, it is nontrivial to find
LP relaxations with constant integrality
gap (see~\cite{naor_2001_tree,%
chekuri_2006_steiner,%
ravi_2008_approximating}
and references therein).

A second obstacle, which also makes clear why MST
interdiction seems substantially more difficult
to approximate than MCP, 
is the fact that MST interdiction can be
interpreted as a multilevel BGD problem, with
interactions between the levels that are hard
to control.
To highlight this connection, which goes back
to~\cite{frederickson_1999_increasingC},
we first observe that one can assume that each edge
weight is either zero or a power of two, by losing
at most a factor of $2$ in the approximation guarantee.
This is achieved by rounding down all edge weights
to the next power of $2$ (without changing zero-edges).
Let $E_{\leq i}$ be all edges with weight
at most $2^i$. 
Now one can observe, and we will formalize this in
Section~\ref{sec:preliminaries}, that the weight
of an MST is determined by the number of connected
components of $G_i=(V,E_{\leq i})$ for each $i$.
Hence, MST interdiction seeks to break the graphs
$G_i$ into as many components as possible, 
where breaking a graph $G_i$ into an additional component
has an impact on the weight of MSTs that is the higher,
the larger the index $i$ is.
The approximation algorithm of
Frederickson and Solis-Oba~\cite{frederickson_1999_increasingC}
essentially focusses only on one level where a high 
impact can be achieved, thus reducing the problem to 
a BGD problem, or an MCP 
for the case of unit interdiction costs.
No algorithm is known so far that exploits the interactions
between the different levels, which seems crucial for
obtaining $O(1)$-approximations.

The way we address these two obstacles is as follows.
First we obtain a good upper bound $\nu^*$ for the optimal
value $\OPT$ by formulating a parametric submodular
minimization problem.
However, instead of finding a way to directly compare
against $\nu^*$, we focus on what we call \emph{efficiencies}
of potential edge sets to remove.
More precisely, the efficiency
of a set $U\subseteq E$---which does not need to fulfill
the budget constraint---is defined as follows.
Let $\val(U)$ be the weight of an MST in $(V,E\setminus U)$.
Then the efficiency of $U$ is given by $\val(U)/c(U)$.
Apart from simple special cases, our algorithm computes
a set $U\subseteq E$ that is over budget, and whose efficiency
is close to $\nu^*/B$, which is at least as good as
the efficiency of an optimal interdiction set.
The core part of our algorithm is a procedure that,
given a set $U\subseteq E$ with $c(U) > B$, computes a
set $R\subseteq U$ fulfilling the budget constraint and
whose efficiency is close to the efficiency of $U$.
Since we choose $U$ to have a close-to-optimal efficiency,
this allows us to compare our solution to $\nu^*$.

To design this core part of the algorithm, we overcome
the above-explained difficulty coming from the interpretation
of MST interdiction as multilevel BGD problem as follows.
We exploit that $U\subseteq E$ is a
high-efficiency set, which implies that it has a good
overall impact over the different levels $i$.
To obtain a solution $R$ that largely inherits this property
from $U$, we start with $R=\emptyset$ and successively add to
$R$ appropriate subsets of $U$ that are guaranteed to have
a good impact over several levels, as long as $c(R)\leq B$.

We highlight that, in the interest of clarity,
we did not try to heavily optimize constants.

\subsection{Further related work}\label{subsec:relatedWork}

Many interdiction problems beyond the minimum spanning tree 
setting have been studied.
This includes interdiction versions
of the maximum $s$-$t$ flow
problem~%
\cite{wood_1993_deterministic,phillips_1993_network,%
zenklusen_2010_network}
(a setting often called \emph{network flow interdiction}),
the shortest path problem~\cite{ball_1989_finding,khachiyan_2008_short},
the maximum matching
problem~\cite{zenklusen_2010_matching,dinitz_2013_packing},
interdicting the connectivity of
a graph~\cite{zenklusen_2014_connectivity},
interdiction of packings~\cite{dinitz_2013_packing},
stable set interdiction~\cite{bazgan_2011_most},
and variants of facility
location~\cite{church_2004_identifying}. 
However, the theoretical understanding
of most interdiction problems still seems rather limited.
A good example for which a large gap remains between the best
known hardness results and approximation algorithms is network
flow interdiction. Network flow interdiction is a strongly NP-hard
problem~\cite{wood_1993_deterministic} for which no 
approximation results are known, except for a
pseudo-approximation~\cite{burch_2002_decompositionbased}
which is allowed to violate the budget by a factor of $2$.

A related line of research is the study of a \emph{continuous}
version of interdiction problems, where the weight
of edges can be increased
continuously at a given weight per cost ratio which depends on
the edge.
These models are typically much more tractable then their
discrete counterparts, i.e., the classical interdiction problems.
The reason for this is that they can often be written as
a single linear program. In particular, efficient
algorithms for continuous
interdiction have been obtained for maximum weight independent
set in a matroid~\cite{frederickson_1999_increasingC},
maximum weight common independent sets in two
matroids and the minimum
cost circulation problem~\cite{juettner_2006_budgeted}.

We highlight
that~\cite{shen_1999_finding} claims to
present a $2$-approximation for
the $k$ most vital edges problem for MST.
However, the results in~\cite{shen_1999_finding}
are based on an erroneous lemma about spanning trees.
\ifbool{shortVersion}{
We provide details on this erroneous lemma in
the long version of the paper.
}{
In Appendix~\ref{app:shen} we provide details on this
erroneous lemma.
}

\subsection*{Organization of the paper}

We formally define the problem and present some
basic observations in Section~\ref{sec:preliminaries}.
Section~\ref{sec:highLevel} outlines our algorithmic
approach, and reduces the task of finding an
$O(1)$-approximation for MST interdiction to one
specific subproblem, for which we present
an algorithm in Section~\ref{sec:mainAlg}.
The analysis of this algorithm is provided in
Section~\ref{sec:analysis}.
\ifbool{shortVersion}{
All missing proofs can be found in the long version
of the paper. All figures have been placed at the
end of the document.
}{
Finally,
Section~\ref{sec:TSPInt} provides the details of
our result for metric TSP interdiction, thus proving
Theorem~\ref{thm:TSPmain}.}

\section{Preliminaries}\label{sec:preliminaries}

Throughout this paper, $G=(V,E)$ is an undirected multigraph
with edge weights $w:E\rightarrow \mathbb{Z}_{\geq 0}$,
edge costs $c:E\rightarrow \mathbb{Z}_{>0}$, and
a global budget $B\in \mathbb{Z}_{>0}$.
Furthermore, we assume that each edge weight
is either a power of two or zero, i.e.,
$w:E\rightarrow \{0,1,\dots, 2^p\}$. This can be achieved by
rounding down all edge weights to the next power of two
(without changing zero-edges). Clearly, this rounding
changes the weight of any MST in $G$ or any of its
subgraphs by at most a factor of two.
Hence, any $\alpha$-approximation for MST interdiction
with weights being powers of two is a $2\alpha$-approximation
for general MST interdiction.

The MST interdiction problem asks to find a subset
of edges $R\subseteq E$ with
$c(R)\leq B$ that maximizes the weight of an MST 
in $(V, E\setminus R)$; we denote the weight of such
an MST by $\val(R)$.
Hence, $\val(\emptyset)$ is the weight of a minimum spanning
tree in $G$, and the MST interdiction problem
can formally be described by
\begin{equation}\label{eq:mstIntProb}
\max \{\val(R) \;\vert\; R\subseteq E, c(R) \leq B\}.
\end{equation}
Let $\OPT$ be the optimal value of
problem~\eqref{eq:mstIntProb}.
We call a set $R\subseteq E$ with $c(R)\leq B$
an \emph{interdiction set}.
When talking about edge sets $U\subseteq E$
that may not satisfy the budget constraint,
but about which we still think of edges to
be removed, we use the notion \emph{removal set}.

To easily distinguish the different weight-levels we
define 
\begin{gather*}
E_{-1} = \{e\in E \mid w(e) = 0\}, \quad
E_i    = \{e\in E\mid w(e) = 2^i\}
  \quad \forall i\in \{0,\dots, p\}, \text{ and}\\
E_{\leq i} = E_{-1} \cup \dots \cup E_i
  \quad \forall i\in \{-1,\dots, p\}.
\end{gather*}
To avoid trivial cases, we assume that 
no interdiction set disconnects
the graph, i.e.,
$c(\delta(S)) > B$ for all $S\subsetneq V, S\neq \emptyset$,
where $\delta(S)\subseteq E$ is the set of all edges with
precisely one endpoint in $S$.
Due to this, there is always an optimal interdiction set
that does not remove any edge from $E_p$, i.e., the
edges with heaviest weight.
Indeed, removing edges of heaviest weight
cannot increase the weight of an MST, except
if one could break the graph into several components,
which we excluded.
For simplicity we can therefore assume that
$(V,E_p)$ is a connected graph.
This can be achieved
by adding a non-interdictable spanning tree consisting of
edges of weight $2^p$ to $G$.
By the above discussion, adding such edges does
not have any impact on the MST interdiction
problem. Since there are optimal interdiction sets
not containing any edge of $E_p$, we will consider
throughout the paper only removal sets that are
subsets of $E_{\leq p-1}$.
Moreover, we assume to have at least $3$
levels, i.e., $p\geq 1$, to simplify the presentation.

Furthermore, we assume that there is an
interdiction set $R\subseteq E_{\leq p-1}$
such that $(V,E_{\leq p-1}\setminus R)$ has more connected components
than $(V,E_{\leq p-1})$. Without this assumption,
there is no interdiction set $R$ that increases the number
of edges in $E_p$ that must be used
in any MST in $(V,E\setminus R)$.
In such a case, independent of the interdiction set $R$,
any MST in $(V, E\setminus R)$ would use the same
number of edges in $E_p$, namely a minimal
set of edges connecting the connected components 
of $(V,E_{\leq p-1})$.
Hence, one could reduce the
problem by contracting any minimum edge set 
in $E_p$ that connects the connected
components of $(V,E_{\leq p-1})$.

For our analysis we focus on a well-known formula
to describe the weight of an MST, which highlights the
level-structure.
For $U\subseteq E$, let $\sigma(U)$ be the number of
connected components of the graph $(V,U)$.
For any $U\subseteq E_{\leq p-1}$, the weight $\val(U)$ of
an MST in $(V,E\setminus U)$ is given by
\begin{equation}\label{eq:valLevels}
\val(U) = \sigma(E_{-1} \setminus U) - 1
  + \sum_{i=0}^{p-1} 2^{i}
  \bigg(\sigma(E_{\leq i} \setminus U) - 1\bigg).
\end{equation}
This formula readily follows from the optimality
of the greedy algorithm to find an MST, or from 
known results on matroid
optimization~(see \cite[Volume B]{schrijver_2003_combinatorial}).%
\footnote{In particular,~\eqref{eq:valLevels} is a consequence
of Theorem~40.2 in~\cite{schrijver_2003_combinatorial},
which describes the weight of a maximum spanning tree in terms
of the rank function 
$r:2^E\rightarrow \mathbb{Z}_{\geq 0}$ of the
graphic matroid, which satisfies 
$r(U) = n-\sigma(U)$.
Notice that the MST problem can easily
be reduced to the maximum spanning tree problem
with nonnegative weights
by replacing each edge weight $w(e)$ by
$M-w(e)$ for a sufficiently large constant $M$.
}
Furthermore, it shows explicitly that for every
additional component that is created on level
$i\in \{0,\dots, p-1\}$---i.e., in the graph
$(V,E_{\leq i})$---when removing $U$, the
weight of MSTs increases by $2^i$.
Moreover, we highlight the well-known fact that $\sigma(U)$, and
therefore also $\sigma(E_{\leq i}\setminus U)$ for
$i\in \{-1,\dots, p-1\}$, is a supermodular function
in $U$, i.e.,
$\sigma(A) + \sigma(B) \leq \sigma(A\cup B) + \sigma(A\cap B)$
for $A,B\subseteq E$. This follows from the fact
that $\sigma(U) = n - r(U)$, where $r$ is
the rank function of the graphic matroid, which is
submodular.
This also implies that $\val(U)$ is supermodular in $U$,
a fact we use later to find a removal set of high efficiency
via submodular function minimization.

\section{Outline of our approach}\label{sec:highLevel}

A core part of our algorithm is described in the following
theorem.
Before proving the theorem in Section~\ref{sec:mainAlg},
we will show 
how it can be used to obtain an $O(1)$-approximation
for MST interdiction.

\begin{theorem}\label{thm:mainAlgThm}
There is an efficient algorithm (to be described in
Section~\ref{sec:mainAlg})
that, for any set $U\subseteq E_{\leq p-1}$ with 
$c(U)>B$,
returns an interdiction set $R\subseteq E$ with
\begin{equation}\label{eq:mainAlgThm}
\val(R) \geq  \frac{1}{2} \cdot B\cdot
  \frac{\val(U)}{c(U)} - 2^{p+1}.
\end{equation}
\end{theorem}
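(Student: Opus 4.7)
The plan is to build $R$ incrementally as a union of carefully chosen subsets of $U$, each of which is cheap yet creates several new connected components across several levels at once. The starting point is the decomposition~\eqref{eq:valLevels}, which lets us write $\val(S) - \val(\emptyset) = \Delta_{-1}(S) + \sum_{i=0}^{p-1} 2^{i}\Delta_i(S)$ where $\Delta_i(S) := \sigma(E_{\leq i}\setminus S) - \sigma(E_{\leq i})$. Hence obtaining $\val(R) \gtrsim B\,\val(U)/c(U)$ is equivalent to creating, through $R\subseteq U$, roughly a $B/c(U)$ fraction of the level-weighted components that $U$ creates.

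For each level $i$, I would work in the contracted multigraph $H_i$ whose vertices are the connected components of $(V, E_{\leq i}\setminus U)$ and whose edges are the elements of $U\cap E_{\leq i}$. Under this view, any $S\subseteq U$ that is a cut in the $H_i$-component of some node creates at least one extra component at level $i$ (and, when $S\subseteq E_{\leq j}$, at all levels $j\leq i$ whose contracted graph it also disconnects). The elementary building block I would extract is therefore a subset $S\subseteq U\setminus R$ together with the largest index $i^{\star}(S)$ such that $S$ simultaneously acts as a cut at levels $-1,0,\dots,i^{\star}(S)$; the ``efficiency'' of $S$ is the ratio $\bigl(\sum_{i\leq i^{\star}(S)} 2^{\max(i,0)}\bigr)/c(S)$, which can be evaluated using minimum-cut computations in each $H_i$ after contracting the edges already placed in~$R$.

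The algorithm then becomes a natural greedy: while budget remains, pick the unused subset $S\subseteq U\setminus R$ of maximum efficiency whose cost fits the residual budget, add it to $R$, and update the contracted graphs. For the analysis, I would argue two things. First, at every step the best available efficiency is at least $\val(U)/c(U)$, because $U\setminus R$ itself witnesses that ratio (up to a small loss due to already-contracted edges, which one absorbs). Second, at termination either $c(R)$ is sufficiently close to $B$---giving $\val(R)-\val(\emptyset)\geq \tfrac12 B\,\val(U)/c(U)$ by the per-step efficiency bound---or the loop stopped because a single candidate cut $S^{\star}$ would have exceeded the budget. In the latter case $S^{\star}$ contributes at most $\sum_{i=-1}^{p-1}2^{\max(i,0)}\leq 2^{p+1}$ in value (one extra component per level is the generous upper bound), which is exactly the additive slack allowed in~\eqref{eq:mainAlgThm}.

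The main obstacle I anticipate is the second half of the analysis: proving that the greedy prefix really does realize an efficiency at least $\val(U)/c(U)$ at each step. Because $\Delta_i$ is supermodular, a naive exchange argument does not apply---adding one subset to $R$ can \emph{decrease} the marginal value of the remaining parts of $U$ at higher levels, since contracting previously chosen cuts can merge what used to be separate pieces. Handling this requires comparing the efficiency of the best available cut against the aggregated efficiency of the \emph{remainder} of $U$ in the updated graphs, and showing that the comparison still favors the greedy choice. Making this accounting clean, and ensuring the final ``last cut'' loss is no worse than $2^{p+1}$, is where the proof will have to do real work; the constant $1/2$ in the statement hints that one loses exactly a factor of two between the best cut and the average efficiency of the tail, which I would try to obtain by a convexity/averaging argument over the level-weighted components produced by $U\setminus R$.
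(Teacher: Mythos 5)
Your high-level plan (greedily accumulate cheap edge sets that each create components on several levels at once, argue a per-step efficiency of roughly $\val(U)/c(U)$, and absorb the last unaffordable step into the additive $2^{p+1}$) matches the spirit of the paper's argument, but the step you yourself flag as the ``main obstacle'' is precisely the content of the theorem, and your proposed building blocks are too weak to close it. You take as atomic units arbitrary cuts $S\subseteq U\setminus R$ and credit each with at most one new component per level, i.e.\ value $\sum_{i\le i^{\star}(S)}2^{\max(i,0)}$. With that accounting there is no reason a single such cut of efficiency at least $\val(U)/c(U)$ exists: $U$ may owe its value to creating \emph{many} components per level, and when you try to recover this by decomposing $U\setminus R$ into many single cuts, the cuts share edges across levels in ways that a naive averaging argument (over a supermodular objective, as you note) does not control. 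Also, finding a maximum-efficiency cut of this kind is itself a nontrivial optimization that you would need to show is tractable.

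The paper resolves exactly this by choosing different atoms: the blocks $W$ of the component partitions $\mathcal{A}_i$ of $(V,E_{\le i}\setminus U)$, together with \emph{all} edges of $U_{\le i}$ incident to $W$. Removing these edges reproduces, inside $W$, the full effect of $U$ on every level $\le i$, so the credited impact $g_i(W)$ counts all descendant components of $W$ across levels $-1,\dots,i$, not one per level. The cost proxy $\kappa_i(W)$ double-counts internal edges, and since every edge of $U$ meets at most two blocks one gets $\kappa_p(V)=2c(U)$ --- this is where the factor $\tfrac12$ actually comes from, not from a best-cut-versus-tail-average loss as you conjecture. The greedy then runs top-down over this laminar family, sorting children by $g/\kappa$, and the per-step comparison you could not establish is proved by induction down the hierarchy via an elementary ``prefix dominates average'' lemma for sorted ratios, with the $2^i$ terms accumulating geometrically into the additive $O(2^p)$ slack. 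A final twist you would also need: if the greedy stops far under budget, one compares against the pattern obtained by adding one more (over-budget) block, whose impact exceeds the returned one by at most $2^{p-1}$. As written, your proposal identifies the right difficulty but does not supply the hierarchical decomposition and the prefix-averaging lemma that overcome it, so it does not constitute a proof.
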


We can get rid of the additive term $2^{p+1}$
in~\eqref{eq:mainAlgThm} by a best-of-two
algorithm that either returns the interdiction set $R$
claimed by Theorem~\ref{thm:mainAlgThm} or
an interdiction set that increases the number
of connected components in $(V,E_{\leq p-1})$,
which exists by assumption.

\begin{corollary}\label{cor:lowBoundFromU}
There is an efficient algorithm 
that, for any set $U\subseteq E$ with $c(U)>B$,
returns an interdiction set $R\subseteq E$ with
\begin{equation*}
\val(R) \geq \frac{1}{6} \cdot B\cdot
  \frac{\val(U)}{c(U)}.
\end{equation*}
\end{corollary}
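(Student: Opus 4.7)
The plan is to implement the best-of-two algorithm sketched immediately above the corollary statement: one candidate is the interdiction set $R$ returned by Theorem~\ref{thm:mainAlgThm}, and the other is an interdiction set $R'$ whose removal increases the number of connected components of $(V, E_{\leq p-1})$. Such an $R'$ exists by the standing assumption and can be found efficiently, e.g., by computing a minimum cost edge cut inside each connected component of $(V, E_{\leq p-1})$; by assumption its cost is at most $B$.

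There is a minor mismatch to handle first: Theorem~\ref{thm:mainAlgThm} requires its input to lie in $E_{\leq p-1}$, whereas the corollary allows $U\subseteq E$. I address this by setting $U' := U \cap E_{\leq p-1}$. Since $(V,E_p)$ is connected, removing edges of $E_p$ does not affect the quantities $\sigma(E_{\leq i}\setminus \cdot)$ appearing in~\eqref{eq:valLevels} for $i\in\{-1,\dots,p-1\}$, so $\val(U')=\val(U)$ while $c(U')\leq c(U)$. If $c(U')\leq B$ then $U'$ is already an interdiction set, and $c(U)>B$ gives $\val(U')=\val(U) \geq \frac{1}{6}B\cdot \frac{\val(U)}{c(U)}$ immediately, so we simply return $U'$. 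Otherwise $c(U')>B$, and Theorem~\ref{thm:mainAlgThm} applied to $U'$ yields a set $R$ with
\[
\val(R) \geq \frac{1}{2}B\cdot \frac{\val(U')}{c(U')} - 2^{p+1}
        \geq \frac{1}{2}B\cdot \frac{\val(U)}{c(U)} - 2^{p+1},
\]
using $\val(U')=\val(U)$ and $c(U')\leq c(U)$.

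The crucial step is a clean lower bound on $\val(R')$. Using the inclusions $E_{-1}\subseteq E_{\leq p-1}$ and $E_{\leq i}\subseteq E_{\leq p-1}$ for $i\in\{0,\dots,p-1\}$, the set $E_{\leq i}\setminus R'$ is contained in $E_{\leq p-1}\setminus R'$ and therefore has at least $\sigma(E_{\leq p-1}\setminus R') \geq 2$ connected components, and the same holds for $E_{-1}\setminus R'$. Substituting into formula~\eqref{eq:valLevels} gives
\[
\val(R') \geq (2-1) + \sum_{i=0}^{p-1} 2^i(2-1) = 1 + (2^p-1) = 2^p.
\]

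Writing $\alpha := B\cdot \val(U)/c(U)$, the case analysis is now clean. If $\alpha \geq 6\cdot 2^p = 3\cdot 2^{p+1}$, then $\val(R) \geq \alpha/2 - 2^{p+1} \geq \alpha/2 - \alpha/3 = \alpha/6$. If $\alpha \leq 6\cdot 2^p$, then $\val(R') \geq 2^p \geq \alpha/6$. Either way the better of the two candidates satisfies the claimed bound. The subtle point I expect to be the main obstacle is the observation that disconnecting $(V,E_{\leq p-1})$ automatically forces $\sigma(E_{-1}\setminus R')\geq 2$ (and not merely $\sigma(E_{\leq p-1}\setminus R')\geq 2$): this extra $+1$ contribution from the zero-weight level is precisely what makes the two halves of the case analysis meet at $\alpha = 6\cdot 2^p$ and yields the clean $1/6$ constant.
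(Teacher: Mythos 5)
Your proof is correct and follows essentially the same best-of-two strategy as the paper: combine the set from Theorem~\ref{thm:mainAlgThm} with a cut that disconnects $(V,E_{\leq p-1})$, lower-bound the latter's value by $2^p$ via the observation that all levels $i\le p-1$ then have at least two components, and balance the two candidates (your case split on $\alpha$ is arithmetically equivalent to the paper's chain $\frac{1}{2}B\,\val(U)/c(U)\le \val(R_1)+2\val(R_2)\le 3\val(R)$). Your explicit handling of the $U\subseteq E$ versus $U\subseteq E_{\leq p-1}$ mismatch is a small point of extra care that the paper's own proof glosses over.
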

\ifbool{shortVersion}{}{%
\begin{proof}
Let $U\subseteq E$ with $c(U)>B$,
and let $R_1\subseteq E$ be an interdiction set
as claimed by Theorem~\ref{thm:mainAlgThm}.
Furthermore, by assumption,
there exists an interdiction set $R_2\subseteq E$, such that
$(V,E_{\leq p-1} \setminus R_2)$ has at least two
components.
(Actually, the assumption even implies that there is
an interdiction set $R_2$ such that
$(V,E_{\leq p-1} \setminus R_2)$ has at least one
more component than $(V,E_{\leq p-1})$.)
Such a set $R_2$ can be found efficiently
by finding a minimum cost cut in $(V,E_{\leq p-1})$.
Hence, $\val(R_2)\geq 2^p$.
Let $R\in \argmax_{i\in \{1,2\}}\val(R_i)$.
The set $R$ satisfies the conditions of
Theorem~\ref{cor:lowBoundFromU} since
\begin{align*}
\frac{1}{2} B \frac{\val(U)}{c(U)}
  &\leq \val(R_1) + 2^{p+1}
    && \text{(by~\eqref{eq:mainAlgThm})}\\
  &\leq \val(R_1) + 2 \val(R_2)\\
  &\leq 3 \val(R).
\end{align*}

\end{proof}
}

In the following we show that either
we can get an $O(1)$-approximation to MST interdiction
with a quite direct approach, or we can find a removal
set $U\subseteq E$ with $c(U)>B$ and high
efficiency $\val(U)/c(U)$.
For this, we take a somewhat different, bi-objective
look on MST interdiction that is independent 
of the budget value $B$.
Namely, for all sets $U\subseteq E_{\leq p-1}$ we
consider the tuple $(c(U), \val(U))$.
We are interested in sets $U\subseteq E$ with
a large MST value $\val(U)$ and small cost $c(U)$,
which can be interpreted as two objectives on $U$.
Using standard notions of multi-objective optimization,
we say that a tuple 
$(c(U), \val(U))$ is \emph{non-dominated} if
there is no other set $U'\subseteq E_{\leq p-1}$
with $c(U')\leq c(U)$, $\val(U')\geq \val(U)$ and at
least one of these two inequalities being strict.
The Pareto front in the cost-value space consists
therefore of
all non-dominated tuples $(c(U),\val(U))$
for $U\subseteq E_{\leq p-1}$,
which can all be interpreted 
as optimal solutions to problem~\eqref{eq:mstIntProb}
when varying the budget.

Whereas finding a particular point on the Pareto
front through solving problem~\eqref{eq:mstIntProb}
is NP-hard (since it is precisely the MST interdiction
problem), one can efficiently compute so-called
\emph{extreme supported solutions} or
\emph{extreme supported tuples},
which are 
all vertices of
$\conv(\{ (c(U),\val(U)) \mid U\subseteq E_{\leq p-1}\})
+\mathbb{R}_{\geq 0} \times \mathbb{R}_{\leq 0}$,
where $\conv$ is the convex hull operator.
Hence, a tuple $(c(U), \val(U))$ for some
$U\subseteq E_{\leq p-1}$ is an extreme supported tuple
if there is a $\lambda \geq 0$ such that
this tuple is the unique minimizing tuple for
\begin{equation}\label{eq:attackProb}
\min\{ \lambda\cdot c(U) - \val(U)
  \mid U\subseteq E_{\leq p-1} \}.\enspace%
\footnote{Notice that~\eqref{eq:attackProb} can also
be interpreted as a Lagrangean dual of
$\min\{-\val(R) \mid R\subseteq E, c(R) \leq B\}$.
We focus on the Pareto front interpretation since
it is natural for properties we want
to highlight later. The Lagrangian dual approach
has been employed in similar problems
in budgeted optimization 
(see~\cite{ravi_1996_constrained,grandoni_2014_new}
and references therein).}
\end{equation}
Notice that there may be several edge sets
$U\subseteq E_{\leq p-1}$ that correspond to the
same (extreme supported) tuple.
Figure~\ref{fig:case3}
\ifbool{shortVersion}{on page~\pageref{fig:case3}}{}
shows an example
of a Pareto front where the filled dots
correspond to all extreme supported
tuples, which we denote by $\mathcal{X}$.
Notice that for any $\lambda\geq 0$, the
objective $\lambda\cdot c(U) - \val(U)$ is a
submodular function in $U$ because $\val(U)$
is supermodular and $\lambda \cdot c(U)$ is
modular in $U$.
Problem~\eqref{eq:attackProb} is therefore
a parametric submodular function minimization
problem, which is a well-studied problem 
(see~\cite{fleischer_2003_push-relabel,%
nagano_2007_faster}).
In particular, there is a set of at most
$\beta\leq |E_{\leq p-1}|+1 \leq m+1$
different solutions
$U_1, \dots, U_\beta$, such that for each
$\lambda \geq 0$, one of these solutions is
optimal for~\eqref{eq:attackProb}.
In other words, the optimal value
of~\eqref{eq:attackProb} is a piecewise linear
function in $\lambda$ with at most $m+1$ segments.
The upper bound of $|E_{\leq p-1}|+1$ on $\beta$ follows by the
fact that one can choose sets $U_i$
that are nested.
Furthermore, Nagano~\cite{nagano_2007_faster}
showed that such a family of sets
$U_1,\dots, U_\beta$ can be determined
by a variation of Orlin's submodular function
minimization algorithm~\cite{orlin_2009_faster}
within the same strongly polynomial time complexity.
In summary, we can find in strongly polynomial
time all $O(m)$ points in $\mathcal{X}$ each 
with a corresponding set $U\subseteq E_{\leq p-1}$.

To find a good interdiction set, we distinguish
the following three cases, depending on the budget $B$.
\begin{enumerate}
\setlength{\itemsep}{0em}
\item[Case 1:] There is a tuple
$(\val(U),c(U))\in \mathcal{X}$ such that
$c(U)=B$. In this case $U$ is an optimal
solution to~\eqref{eq:mstIntProb} that we can
find efficiently and return. 

\item[Case 2:] $B$ is larger than the largest
first coordinate among all points in $\mathcal{X}$.
This implies that all edges in $E_{\leq p-1}$
can be removed simultaneously without exceeding
the budget. Hence, we return the interdiction
set $R=E_{\leq p-1}$ which is clearly optimal.

\item[Case 3:] There are two 
tuples $p_1=(c(U_1), \val(U_1)), p_2 = (c(U_2), \val(U_2))\in \mathcal{X}$
such that $c(U_1) < B < c(U_2)$, and $p_1$ and $p_2$
are consecutive in the sense that
there is no other tuple $(c(U),\val(U))\in \mathcal{X}$
with $c(U_1) < c(U) < c(U_2)$.
\end{enumerate}

Since we easily get optimal solutions for the first two cases,
we assume from now on to be in the third case.
Figure~\ref{fig:case3}
highlights a possible set $\mathcal{X}$ 
that corresponds to the third case.
\begin{figure}
\begin{center}
\begin{tikzpicture}[scale=1.2]

\begin{scope}[-stealth]
\draw (0,0) -- (6.5,0) node[right] {$c(U)$};
\draw (0,0) -- (0,5.5) node[above] {$\val(U)$};
\end{scope}

\begin{scope}[every node/.style={circle,
fill=black, draw=black,
outer sep=0pt, inner sep=0pt,
minimum size=4pt}]
\node (p1) at (0,0) {};
\node (p2) at (0.3,2) {};
\node (p3) at (1.2,3.5) {};
\node (p4) at (3.5,5) {};
\node (p5) at (5.5,5.3) {};
\node[minimum size=0pt] (p6) at (6.5,5.3) {};
\end{scope}

\begin{scope}[every node/.style={circle,
draw=black, fill=white, outer sep=0pt,
inner sep=0pt, minimum size=4pt}]

\node at (0.2,0.7) {};

\node at (0.7,2.2) {};
\node at (1.0,2.4) {};
\node at (1.1,2.9) {};

\node at (1.8,3.7) {};

\node (opt) at (2.3,3.9) {};

\node at (2.8,4.4) {};
\node at (3.3,4.6) {};
\node at (4.8,5.07) {};
\node at (5.2,5.14) {};
\end{scope}

\coordinate (ot) at ($(opt) + (-0.5,-1)$);
\draw[-stealth, shorten >= 2pt] (ot) -- (opt);
\node[below, align=center] at (ot) {optimal\\ solution};

\begin{scope}
\draw (p1) -- (p2) --
 (p3) -- (p4) -- (p5) -- (p6);
\end{scope}

\path[name path=p3p4] (p3) -- (p4);

\begin{scope}
\draw[name path=budget] (2.5,0) -- (2.5,5.5);
\node[below] at (2.5,0) {$B$};

\path [name intersections={of=p3p4
  and budget,by=int}];

\end{scope}

\coordinate (tf) at ($(int)+(-0.5,0.5)$);
\draw[-stealth, shorten >= 2pt] (tf) -- (int);

\begin{scope} 

\node[above left, xshift=14pt] at (p3)
  {$\begin{pmatrix}
c(U_1) \\
\val(U_1)
\end{pmatrix}$};

\node[above left, yshift=-3pt, xshift=5pt] at (tf)
  {$p=\begin{pmatrix}
B   \\
\nu^*
\end{pmatrix}$};

\node[above] at (p4)
  {$\begin{pmatrix}
c(U_2)\\
\val(U_2)
\end{pmatrix}$};

\end{scope}

\end{tikzpicture}

\end{center}
\caption{A possible constellation for
the third case.
The dots correspond to all non-dominated
solutions, i.e., to the Pareto front.
The filled dots represent the set
$\mathcal{X}$ of all extreme supported
tuples.
In the above example, the optimal tuple is not
part of $\mathcal{X}$.
}
\label{fig:case3}
\end{figure}
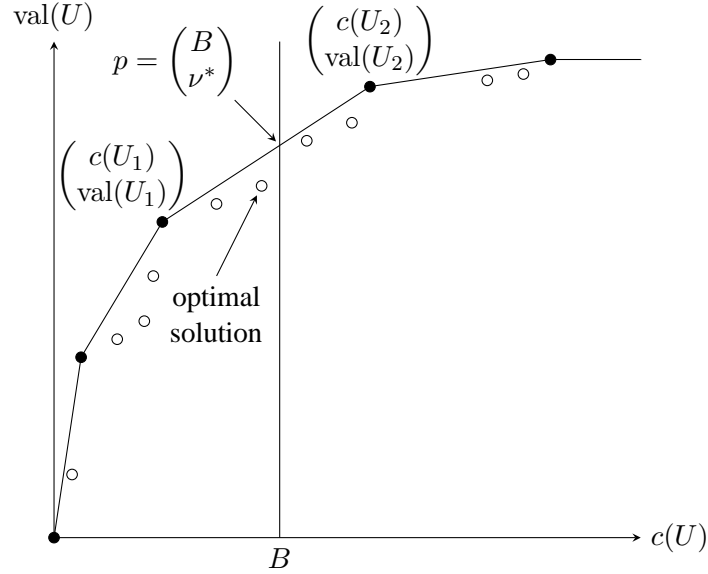
We can now upper bound $\OPT$ as follows.
Consider the point $p=(x,y)$ on the 
segment between $p_1$ and $p_2$  such that $x=B$
(see Figure~\ref{fig:case3}).
Clearly, $y$ is then equal to the following
value, which we denote by $\nu^*$:
\begin{equation}\label{eq:defFStar}
y=\nu^* = \val(U_1)
  + (B-c(U_1)) \frac{\val(U_2)-\val(U_1)}{c(U_2)-c(U_1)},
\end{equation}
and we have $\nu^*\geq \OPT$ 
since all solutions
are below the line that goes through $p_1$ and $p_2$,
because $p_1$ and $p_2$ are consecutive points on
the convex hull of the Pareto front.
We will show that the following algorithm
is an $O(1)$-approximation for the third
case.

\begin{algorithm} 
\DontPrintSemicolon
\eIf{$\val(U_1) \geq \frac{1}{7}\cdot  \nu^*$}{
\medskip
Return $U_1$.
}{
Return an interdiction  set
$R\subseteq E$ satisfying
\begin{equation*}
\val(R) \geq \frac{1}{6}\cdot
  B \cdot \frac{\val(U_2)}{c(U_2)}\quad, 
\end{equation*}
which can be obtained
by Corollary~\ref{cor:lowBoundFromU}.
}
\caption{$O(1)$-approximation for third case}
\label{alg:case3}
\end{algorithm}

\begin{theorem}\label{thm:case3}
Algorithm~\ref{alg:case3} is a
$7$-approximation
for the third case.
\end{theorem}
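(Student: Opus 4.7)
The plan is to combine the bound $\OPT \leq \nu^*$ with a case analysis on which branch of Algorithm~\ref{alg:case3} is executed, showing in each case that the returned interdiction set has value at least $\nu^*/7$.

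\textbf{First branch.} If $\val(U_1) \geq \nu^*/7$, then $U_1$ itself is returned. It is a feasible interdiction set because $c(U_1) < B$ in Case~3, and the approximation guarantee is immediate.

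\textbf{Second branch.} Assume $\val(U_1) < \nu^*/7$. By Corollary~\ref{cor:lowBoundFromU} applied to $U = U_2$ (valid since $c(U_2) > B$ in Case~3), the returned set $R$ satisfies $\val(R) \geq \tfrac{1}{6}\,B\,\val(U_2)/c(U_2)$. It therefore suffices to prove
\[
\frac{B\,\val(U_2)}{c(U_2)} \;\geq\; \frac{6}{7}\,\nu^*.
\]
Let $s = (\val(U_2)-\val(U_1))/(c(U_2)-c(U_1))$ denote the slope of the hull edge between $p_1$ and $p_2$. The definition~\eqref{eq:defFStar} of $\nu^*$ rewrites as $(B-c(U_1))\,s = \nu^* - \val(U_1)$, which by the branch condition strictly exceeds $\tfrac{6}{7}\nu^*$. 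Multiplying by $(c(U_2)-c(U_1))/(B-c(U_1))$ and using $\val(U_1) \geq 0$ gives
\[
\val(U_2) \;\geq\; (c(U_2)-c(U_1))\,s \;>\; \frac{6}{7}\,\nu^*\cdot \frac{c(U_2)-c(U_1)}{B-c(U_1)}.
\]
Dividing by $c(U_2)$ and multiplying by $B$, the displayed target reduces to the elementary inequality
\[
B\,(c(U_2) - c(U_1)) \;\geq\; c(U_2)\,(B - c(U_1)),
\]
which after expansion is equivalent to $c(U_1)\,(c(U_2)-B) \geq 0$; this holds since $c(U_1)\geq 0$ and $c(U_2) > B$.

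\textbf{Main obstacle.} The delicate step is converting the branch condition, which controls only $\val(U_1)$, into a lower bound on the efficiency $\val(U_2)/c(U_2)$ of $U_2$. The trick is to work with the slope $s$ of the hull edge from $p_1$ to $p_2$: the branch condition immediately lower-bounds $(B-c(U_1))\,s$ by $\tfrac{6}{7}\nu^*$, and extrapolating along this edge (using nothing stronger than $\val(U_1) \geq 0$) transfers the bound onto $\val(U_2)$. The final elementary inequality then turns this into a bound on $\val(U_2)/c(U_2)$ uniformly, avoiding any case split on the relative sizes of $B$ and $c(U_2)$.
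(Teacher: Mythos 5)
Your proof is correct, and its overall structure matches the paper's: the first branch is handled by $\OPT\le\nu^*$, and the second branch applies Corollary~\ref{cor:lowBoundFromU} to $U_2$ and reduces everything to showing $B\,\val(U_2)/c(U_2)\ \ge\ \nu^*-\val(U_1)\ >\ \tfrac{6}{7}\nu^*$. The one place you genuinely deviate is in how you establish $B\,\val(U_2)/c(U_2)\ge (B-c(U_1))\,s$. The paper splits this as $\val(U_2)/c(U_2)\ge s$ (inequality~\eqref{eq:compSlopes}, the slope comparison between the hull edge and the ray from the origin to $p_2$) followed by $B\ge B-c(U_1)$; note that \eqref{eq:compSlopes} is \emph{not} a consequence of $\val(U_1)\ge 0$ alone --- it is equivalent to $\val(U_1)\ge c(U_1)\,s$ and thus relies on the convex-hull/Pareto structure (e.g., that the feasible tuple $(0,\val(\emptyset))$ lies weakly below the line through $p_1$ and $p_2$). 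You instead factor the same bound as $\val(U_2)\ge(c(U_2)-c(U_1))\,s$ (which really does follow from just $\val(U_1)\ge 0$) combined with the purely algebraic inequality $B\,(c(U_2)-c(U_1))\ge c(U_2)\,(B-c(U_1))$, i.e., $c(U_1)(c(U_2)-B)\ge 0$. This buys you a slightly more self-contained argument: it uses only $c(U_1)<B<c(U_2)$, nonnegativity, and the definition of $\nu^*$, with no appeal to the geometry of the extreme supported tuples beyond what defines $\nu^*$ itself. The paper's version is shorter on the page but leans on the unproved (though geometrically evident) slope comparison. Both are valid.
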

\ifbool{shortVersion}{}{%
\begin{proof}
If $\val(U_1)\geq \frac{1}{7} \nu^*$,
then Algorithm~\ref{alg:case3} is clearly
a $7$-approximation
since $\nu^*$ upper bounds OPT.
Hence, assume
\begin{equation}\label{eq:inElse}
\val(U_1)< \frac{1}{7}\nu^*.
\end{equation}
Notice that the slope from $p_1$ to $p_2$
is not larger than
the one from the origin to $p_2$, i.e.,
\begin{equation}\label{eq:compSlopes}
\frac{\val(U_2)-\val(U_1)}{c(U_2)-c(U_1)}
  \leq \frac{\val(U_2)}{c(U_2)}.
\end{equation}
We therefore obtain
\begin{align*}
\val(R) &\geq \frac{1}{6}
  \cdot B \cdot \frac{\val(U_2)}{c(U_2)} \\
   &\geq
      \frac{1}{6} \cdot B \cdot
      \frac{\val(U_2)-\val(U_1)}{c(U_2)-c(U_1)}
      && \text{(using \eqref{eq:compSlopes})}    \\
   &\geq \frac{1}{6}\cdot 
        (B-c(U_1))
      \frac{\val(U_2)-\val(U_1)}{c(U_2)-c(U_1)}\\
   &=
      \frac{1}{6}\cdot (\nu^*-\val(U_1))
      && \text{(using \eqref{eq:defFStar})} \\
   &>
      \frac{1}{7} \nu^*,
      && \text{(using \eqref{eq:inElse})}.
\end{align*}
\end{proof}
}

Thus, it remains to show
Theorem~\ref{thm:mainAlgThm}.
Finally, our main result, Theorem~\ref{thm:MSTmain}, is a direct
consequence of the fact that we have a $7$-approximation
for all three cases under the assumption that each weight is either 
zero or a power of two. Hence, this implies a $14$-approximation
for general weights.

\section{Algorithm proving Theorem~\ref{thm:mainAlgThm}\label{sec:mainAlg}}

In this section, we present an algorithm that proves
Theorem~\ref{thm:mainAlgThm}.
For brevity, we define $[k]=\{1,\dots, k\}$ for
$k\in \mathbb{Z}_{\geq 0}$; in particular,
$[0]=\emptyset$.
Throughout this section let
$U\subseteq E_{\leq p-1}$ with $c(U)>B$.
Furthermore, for
$i\in \{-1,\dots, p\}$, we define
\begin{equation*}
U_{\leq i} = U \cap E_{\leq i}.
\end{equation*}

For each $i\in \{-1,\dots, p\}$, let
$\mathcal{A}_i\subseteq 2^V$ be the partition
of $V$ that corresponds to the connected components
of $(V,E_{\leq i}\setminus U)$. 
Notice that the partitions $\mathcal{A}_i$ become
coarser with increasing index $i$. Furthermore,
$\mathcal{A}_p=\{V\}$, since we assume that $(V,E_{p})$
is connected and $U$ does not contain any edges of
$E_p$. See Figure~\ref{fig:partition1} for an example.
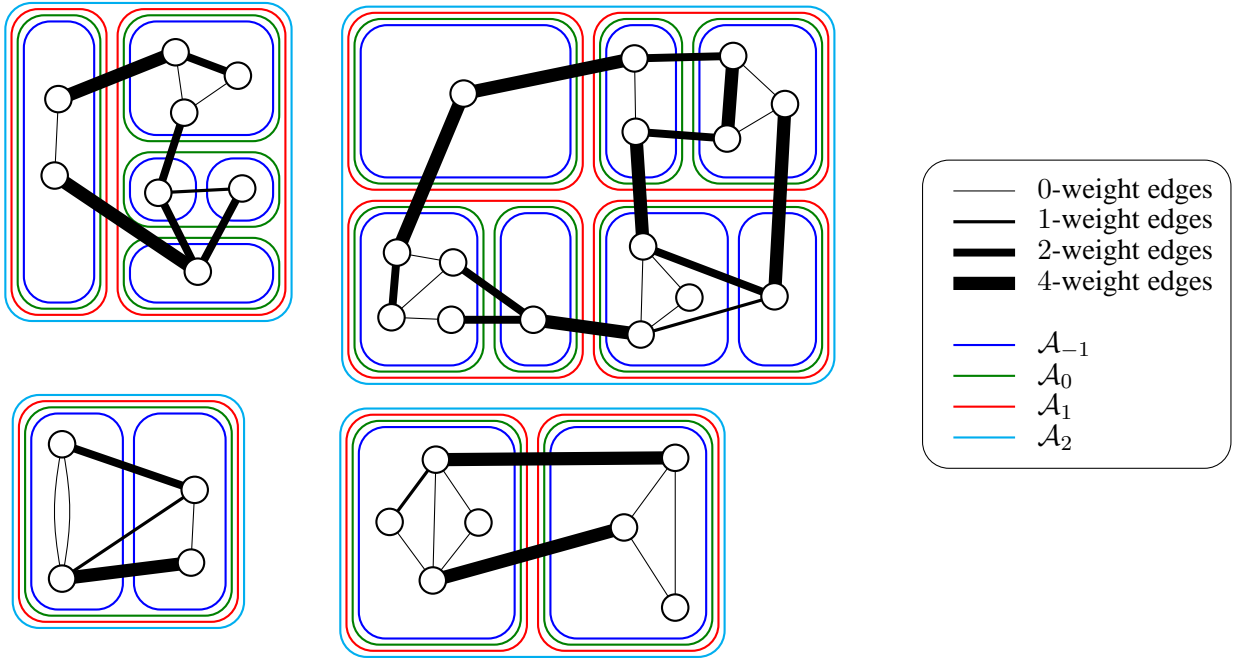
\begin{figure}
\begin{center}
\pgfdeclarelayer{background}
\pgfdeclarelayer{foreground}
\pgfsetlayers{background,main,foreground}

\begin{tikzpicture}[scale=0.82]

\begin{pgfonlayer}{foreground}
\begin{scope}[every node/.style={thick,draw=black,fill=white,circle,minimum size=10, inner sep=2pt}]
\node  (1) at (1.49,-1.51) {};
\node  (2) at (1.43,-2.75) {};
\node  (3) at (3.39,-0.75) {};
\node  (4) at (4.40,-1.13) {};
\node  (5) at (3.53,-1.73) {};
\node  (6) at (3.11,-3.03) {};
\node  (7) at (4.47,-2.96) {};
\node  (8) at (3.75,-4.31) {};
\node  (9) at (8.06,-1.42) {};
\node (10) at (10.83,-0.85) {};
\node (11) at (10.85,-2.04) {};
\node (12) at (12.43,-0.81) {};
\node (13) at (13.27,-1.59) {};
\node (14) at (12.33,-2.15) {};
\node (15) at (6.98,-4.00) {};
\node (16) at (7.89,-4.16) {};
\node (17) at (6.89,-5.05) {};
\node (18) at (7.86,-5.09) {};
\node (19) at (9.19,-5.09) {};
\node (20) at (10.97,-3.90) {};
\node (21) at (11.72,-4.73) {};
\node (22) at (10.93,-5.33) {};
\node (23) at (13.10,-4.71) {};
\node (24) at (1.55,-7.11) {};
\node (25) at (1.55,-9.29) {};
\node (26) at (3.70,-7.85) {};
\node (27) at (3.64,-9.03) {};
\node (28) at (7.61,-7.36) {};
\node (29) at (6.86,-8.37) {};
\node (30) at (8.30,-8.38) {};
\node (31) at (7.56,-9.32) {};
\node (32) at (11.49,-7.33) {};
\node (33) at (10.66,-8.46) {};
\node (34) at (11.49,-9.76) {};
\end{scope}

\end{pgfonlayer}

\tikzstyle{w0} = [black]
\tikzstyle{w1} = [black, line width=1.2pt]
\tikzstyle{w2} = [black, line width=3pt]
\tikzstyle{w4} = [black, line width=5pt]

\tikzstyle{Am1}=[thick,blue]
\tikzstyle{A0}=[thick,green!50!black]
\tikzstyle{A1}=[thick,red]
\tikzstyle{A2}=[thick,cyan]

\begin{scope}[shift={(16,-3)}] 
\def\sep{0.5}
\draw[w0] (0,0) -- (1,0);
\node[right] at (1.2,0) {$0$-weight edges};

\draw[w1] (0,-\sep) -- (1,-\sep);
\node[right] at (1.2,-\sep) {$1$-weight edges};

\draw[w2] (0,-2*\sep) -- (1,-2*\sep);
\node[right] at (1.2,-2*\sep) {$2$-weight edges};

\draw[w4] (0,-3*\sep) -- (1,-3*\sep);
\node[right] at (1.2,-3*\sep) {$4$-weight edges};

\draw[Am1] (0,-5*\sep) -- (1,-5*\sep);
\node[right] at (1.2,-5*\sep) {$\mathcal{A}_{-1}$};

\draw[A0] (0,-6*\sep) -- (1,-6*\sep);
\node[right] at (1.2,-6*\sep) {$\mathcal{A}_{0}$};

\draw[A1] (0,-7*\sep) -- (1,-7*\sep);
\node[right] at (1.2,-7*\sep) {$\mathcal{A}_{1}$};

\draw[A2] (0,-8*\sep) -- (1,-8*\sep);
\node[right] at (1.2,-8*\sep) {$\mathcal{A}_{2}$};

\draw[rounded corners=10pt] (-\sep,\sep) rectangle (4.5,-4.5);

\end{scope}

\begin{scope}[shorten <= -2pt, shorten >= -2pt]

\begin{scope}[w0] 
\draw  (1) --  (2);

\draw[w2]  (3) --  (4);
\draw  (3) --  (5);
\draw  (4) --  (5);

\draw (10) -- (11);

\draw (12) -- (13);
\draw[w4] (12) -- (14);
\draw (13) -- (14);

\draw (15) -- (16);
\draw[w2] (15) -- (17);
\draw (16) -- (17);
\draw (17) -- (18);

\draw (20) -- (21);
\draw (20) -- (22);
\draw (21) -- (22);

\draw (24) to[bend left=10] (25);
\draw (24) to[bend right=10] (25);

\draw (26) -- (27);

\draw[w1] (28) -- (29);
\draw (28) -- (30);
\draw (28) -- (31);
\draw (29) -- (31);
\draw (30) -- (31);

\draw (32) -- (33);
\draw (32) -- (34);
\draw (33) -- (34);
\end{scope}

\begin{scope}[w1] 
\draw  (6) --  (7);

\draw[w2] (20) -- (23);
\draw (22) -- (23);

\draw[w2] (24) -- (26);
\draw (25) -- (26);
\draw[w4] (25) -- (27);
\end{scope}

\begin{scope}[w2] 
\draw  (5) --  (6);
\draw  (6) --  (8);
\draw  (7) --  (8);

\draw (10) -- (12);
\draw (11) -- (14);

\draw (16) -- (19);
\draw (18) -- (19);
\end{scope}

\begin{scope}[w4] 
\draw  (1) --  (3);
\draw  (2) --  (8);
\draw  (9) -- (10);
\draw  (11) -- (20);
\draw (13) -- (23);
\draw (15) --  (9);
\draw (19) -- (22);
\draw (28) -- (32);
\draw (31) -- (33);
\end{scope}

\end{scope}

\begin{pgfonlayer}{background}

\begin{scope} 

\def\sep{0.1} 
\def\msep{0.09} 
\def\exp{0.8}

\begin{scope}[A2]

\path let 
    \p1=(2),
    \p2=(3) in coordinate (r1tl) at ($(\x1,\y2)+(-\exp,\exp)$);
\path let 
    \p1=(8),
    \p2=(7) in coordinate (r1br) at ($(\x2,\y1)+(\exp,-\exp)$);

\draw[rounded corners=10pt] (r1tl) rectangle (r1br);

\path let 
    \p1=(17),
    \p2=(12) in coordinate (r2tl) at ($(\x1,\y2)+(-\exp,\exp)$);
\path let 
    \p1=(22),
    \p2=(13) in coordinate (r2br) at ($(\x2,\y1)+(\exp,-\exp)$);

\draw[rounded corners=10pt] (r2tl) rectangle (r2br);

\path let 
    \p1=(25),
    \p2=(26) in coordinate (r3br) at ($(\x2,\y1)+(\exp,-\exp)$);
\path coordinate (r3tl) at ($(24)+(-\exp,\exp)$);

\draw[rounded corners=10pt] (r3tl) rectangle (r3br);

\path let 
    \p1=(29),
    \p2=(32) in coordinate (r4tl) at ($(\x1,\y2)+(-\exp,\exp)$);
\path let 
    \p1=(34),
    \p2=(34) in coordinate (r4br) at ($(\x2,\y1)+(\exp,-\exp)$);

\draw[rounded corners=10pt] (r4tl) rectangle (r4br);

\end{scope}

\begin{scope}[A1]

\coordinate  (r11tl) at ($(r1tl)+(\sep,-\sep)$);
\path let
  \p1=(r1tl),
  \p2=(r1br) in coordinate (r11br) at ($(0.4*\x1+0.4*\x2,\y2) + (-\msep,\sep)$);

\draw[rounded corners=10pt] (r11tl) rectangle (r11br);

\path let
  \p1=(r1tl),
  \p2=(r1br) in coordinate (r12tl) at ($(0.4*\x1+0.4*\x2,\y1) + ( \msep,-\sep)$);
\coordinate  (r12br) at ($(r1br)+(-\sep,\sep)$);

\draw[rounded corners=10pt] (r12tl) rectangle (r12br);

\coordinate (r2xmid) at ($(r2tl)!0.5!(r2br) + (0,0)$);

\coordinate  (r21tl) at ($(r2tl)+(\sep,-\sep)$);
\coordinate (r21br) at ($(r2xmid)+(-\msep,\msep)$);

\draw[rounded corners=10pt] (r21tl) rectangle (r21br);

\path let
  \p1 = (r2tl),
  \p2 = (r2xmid) in
     coordinate (r22tl) at ($(\x2,\y1)+(\msep,-\sep)$);

\path let
  \p1 = (r2xmid),
  \p2 = (r2br) in
     coordinate (r22br) at ($(\x2,\y1)+(-\sep,\msep)$);

\draw[rounded corners=10pt] (r22tl) rectangle (r22br);

\path let
  \p1 = (r2tl),
  \p2 = (r2xmid) in
     coordinate (r23tl) at ($(\x1,\y2)+(\sep,-\msep)$);

\path let
  \p1 = (r2xmid),
  \p2 = (r2br) in
     coordinate (r23br) at ($(\x1,\y2)+(-\msep,\sep)$);

\draw[rounded corners=10pt] (r23tl) rectangle (r23br);

\coordinate  (r24tl) at ($(r2xmid) + (\msep,-\msep)$);
\coordinate (r24br) at ($(r2br)+(-\sep,\sep)$);

\draw[rounded corners=10pt] (r24tl) rectangle (r24br);

\coordinate  (r31tl) at ($(r3tl)+(\sep,-\sep)$);
\coordinate  (r31br) at ($(r3br)+(-\sep,\sep)$);

\draw[rounded corners=10pt] (r31tl) rectangle (r31br);

\coordinate (r4xmid) at ($(r4tl)!0.5!(r4br)$);

\coordinate (r41tl) at ($(r4tl)+(\sep,-\sep)$);

\path let
  \p1 = (r4xmid),
  \p2 = (r4br) in
     coordinate (r41br) at ($(\x1,\y2)+(-\msep,\sep)$);

\draw[rounded corners=10pt] (r41tl) rectangle (r41br);

\path let
  \p1 = (r4xmid),
  \p2 = (r4tl) in
     coordinate (r42tl) at ($(\x1,\y2)+(\msep,-\sep)$);

\coordinate (r42br) at ($(r4br)+(-\sep,\sep)$);

\draw[rounded corners=10pt] (r42tl) rectangle (r42br);

\end{scope}

\begin{scope}[A0]

\coordinate  (r111tl) at ($(r11tl)+(\sep,-\sep)$);
\coordinate  (r111br) at ($(r11br)+(-\sep,\sep)$);

\draw[rounded corners=10pt] (r111tl) rectangle (r111br);

\coordinate (r12xa) at ($(r12tl)!0.45!(r12br)$);
\coordinate (r12xb) at ($(r12tl)!0.73!(r12br)$);

\coordinate (r121tl) at ($(r12tl)+(\sep,-\sep)$);
\path let
  \p1 = (r12xa),
  \p2 = (r12br) in
     coordinate (r121br) at ($(\x2,\y1)+(-\sep,\msep)$);

\draw[rounded corners=10pt] (r121tl) rectangle (r121br);

\path let
  \p1 = (r12xa),
  \p2 = (r12tl) in
     coordinate (r122tl) at ($(\x2,\y1)+(\sep,-\msep)$);

\path let
  \p1 = (r12xb),
  \p2 = (r12br) in
     coordinate (r122br) at ($(\x2,\y1)+(-\sep,\msep)$);

\draw[rounded corners=10pt] (r122tl) rectangle (r122br);

\path let
  \p1 = (r12xb),
  \p2 = (r12tl) in
     coordinate (r123tl) at ($(\x2,\y1)+(\sep,-\msep)$);

\coordinate (r123br) at ($(r12br)+(-\sep,\sep)$);

\draw[rounded corners=10pt] (r123tl) rectangle (r123br);

\coordinate  (r211tl) at ($(r21tl)+(\sep,-\sep)$);
\coordinate  (r211br) at ($(r21br)+(-\sep,\sep)$);

\draw[rounded corners=10pt] (r211tl) rectangle (r211br);

\coordinate (r22x) at ($(r22tl)!0.4!(r22br)$);

\coordinate (r221tl) at ($(r22tl)+(\sep,-\sep)$);
\path let
  \p1 = (r22x),
  \p2 = (r22br) in
     coordinate (r221br) at ($(\x1,\y2)+(-\msep,\sep)$);

\draw[rounded corners=10pt] (r221tl) rectangle (r221br);

\path let
  \p1 = (r22x),
  \p2 = (r22tl) in
     coordinate (r222tl) at ($(\x1,\y2)+(\msep,-\sep)$);
\coordinate (r222br) at ($(r22br)+(-\sep,\sep)$);

\draw[rounded corners=10pt] (r222tl) rectangle (r222br);

\coordinate (r23x) at ($(r23tl)!0.6!(r23br)$);

\coordinate (r231tl) at ($(r23tl)+(\sep,-\sep)$);
\path let
  \p1 = (r23x),
  \p2 = (r23br) in
     coordinate (r231br) at ($(\x1,\y2)+(-\msep,\sep)$);

\draw[rounded corners=10pt] (r231tl) rectangle (r231br);

\path let
  \p1 = (r23x),
  \p2 = (r23tl) in
     coordinate (r232tl) at ($(\x1,\y2)+(\msep,-\sep)$);
\coordinate (r232br) at ($(r23br)+(-\sep,\sep)$);

\draw[rounded corners=10pt] (r232tl) rectangle (r232br);

\coordinate  (r241tl) at ($(r24tl)+(\sep,-\sep)$);
\coordinate  (r241br) at ($(r24br)+(-\sep,\sep)$);

\draw[rounded corners=10pt] (r241tl) rectangle (r241br);

\coordinate  (r311tl) at ($(r31tl)+(\sep,-\sep)$);
\coordinate  (r311br) at ($(r31br)+(-\sep,\sep)$);

\draw[rounded corners=10pt] (r311tl) rectangle (r311br);

\coordinate  (r411tl) at ($(r41tl)+(\sep,-\sep)$);
\coordinate  (r411br) at ($(r41br)+(-\sep,\sep)$);

\draw[rounded corners=10pt] (r411tl) rectangle (r411br);

\coordinate  (r421tl) at ($(r42tl)+(\sep,-\sep)$);
\coordinate  (r421br) at ($(r42br)+(-\sep,\sep)$);

\draw[rounded corners=10pt] (r421tl) rectangle (r421br);

\end{scope}

\begin{scope}[Am1]

\coordinate  (r1111tl) at ($(r111tl)+(\sep,-\sep)$);
\coordinate  (r1111br) at ($(r111br)+(-\sep,\sep)$);

\draw[rounded corners=10pt] (r1111tl) rectangle (r1111br);

\coordinate  (r1211tl) at ($(r121tl)+(\sep,-\sep)$);
\coordinate  (r1211br) at ($(r121br)+(-\sep,\sep)$);

\draw[rounded corners=10pt] (r1211tl) rectangle (r1211br);

\coordinate (r122x) at ($(r122tl)!0.5!(r122br)$);

\coordinate (r1221tl) at ($(r122tl)+(\sep,-\sep)$);
\path let 
  \p1 = (r122x),
  \p2 = (r122br) in
     coordinate (r1221br) at ($(\x1,\y2)+(-\msep,\sep)$);

\draw[rounded corners=10pt] (r1221tl) rectangle (r1221br);

\path let 
  \p1 = (r122x),
  \p2 = (r122tl) in
     coordinate (r1222tl) at ($(\x1,\y2)+(\msep,-\sep)$);
\coordinate (r1222br) at ($(r122br)+(-\sep,\sep)$);

\draw[rounded corners=10pt] (r1222tl) rectangle (r1222br);

\coordinate  (r1231tl) at ($(r123tl)+(\sep,-\sep)$);
\coordinate  (r1231br) at ($(r123br)+(-\sep,\sep)$);

\draw[rounded corners=10pt] (r1231tl) rectangle (r1231br);

\coordinate  (r2111tl) at ($(r211tl)+(\sep,-\sep)$);
\coordinate  (r2111br) at ($(r211br)+(-\sep,\sep)$);

\draw[rounded corners=10pt] (r2111tl) rectangle (r2111br);

\coordinate  (r2211tl) at ($(r221tl)+(\sep,-\sep)$);
\coordinate  (r2211br) at ($(r221br)+(-\sep,\sep)$);

\draw[rounded corners=10pt] (r2211tl) rectangle (r2211br);

\coordinate  (r2221tl) at ($(r222tl)+(\sep,-\sep)$);
\coordinate  (r2221br) at ($(r222br)+(-\sep,\sep)$);

\draw[rounded corners=10pt] (r2221tl) rectangle (r2221br);

\coordinate  (r2311tl) at ($(r231tl)+(\sep,-\sep)$);
\coordinate  (r2311br) at ($(r231br)+(-\sep,\sep)$);

\draw[rounded corners=10pt] (r2311tl) rectangle (r2311br);

\coordinate  (r2321tl) at ($(r232tl)+(\sep,-\sep)$);
\coordinate  (r2321br) at ($(r232br)+(-\sep,\sep)$);

\draw[rounded corners=10pt] (r2321tl) rectangle (r2321br);

\coordinate (r241x) at ($(r241tl)!0.6!(r241br)$);

\coordinate (r2411tl) at ($(r241tl)+(\sep,-\sep)$);
\path let 
  \p1 = (r241x),
  \p2 = (r241br) in
     coordinate (r2411br) at ($(\x1,\y2)+(-\msep,\sep)$);

\draw[rounded corners=10pt] (r2411tl) rectangle (r2411br);

\path let 
  \p1 = (r241x),
  \p2 = (r241tl) in
     coordinate (r2412tl) at ($(\x1,\y2)+(\msep,-\sep)$);
\coordinate (r2412br) at ($(r241br)+(-\sep,\sep)$);

\draw[rounded corners=10pt] (r2412tl) rectangle (r2412br);

\coordinate  (r3111tl) at ($(r311tl)+(\sep,-\sep)$);
\path let
  \p1=(r311tl),
  \p2=(r311br) in coordinate (r3111br) at ($(0.5*\x1+0.5*\x2,\y2) + (- \msep,\sep)$);

\draw[rounded corners=10pt] (r3111tl) rectangle (r3111br);

\path let
  \p1=(r311tl),
  \p2=(r311br) in coordinate (r3112tl) at ($(0.5*\x1+0.5*\x2,\y1) + ( \msep,-\sep)$);
\coordinate  (r3112br) at ($(r311br)+(-\sep,\sep)$);

\draw[rounded corners=10pt] (r3112tl) rectangle (r3112br);

\coordinate  (r4111tl) at ($(r411tl)+(\sep,-\sep)$);
\coordinate  (r4111br) at ($(r411br)+(-\sep,\sep)$);

\draw[rounded corners=10pt] (r4111tl) rectangle (r4111br);

\coordinate  (r4211tl) at ($(r421tl)+(\sep,-\sep)$);
\coordinate  (r4211br) at ($(r421br)+(-\sep,\sep)$);

\draw[rounded corners=10pt] (r4211tl) rectangle (r4211br);

\end{scope}

\end{scope} 
\end{pgfonlayer}

\end{tikzpicture}

\end{center}
\caption{Example of a graph $(V,E_{\leq p-1}\setminus U)$
for $p=3$ 
together with its corresponding partitions
$\mathcal{A}_{-1}$, $\mathcal{A}_0$,
$\mathcal{A}_1$, and $\mathcal{A}_2$. The
edges in $E_p$, which connect all vertices
by assumption, and the coarsest partition
$\mathcal{A}_3=\{V\}$ are not shown.
}
\label{fig:partition1}
\end{figure}
For
$i\in \{-1,\dots, p\}$ and $A\in \mathcal{A}_i$,
we denote by $\mathcal{C}_i(A)\subseteq \mathcal{A}_{i-1}$
the sets in $\mathcal{A}_{i-1}$ that are included in $A$,
which we call the \emph{children} of $A$ (on level $i$).
More formally:
\begin{equation*}
\mathcal{C}_i(A) = \begin{cases}
\emptyset &\text{if } i=-1,\\
\{C\in \mathcal{A}_{i-1} \mid C\subseteq A\}
   &\text{if } i\geq 0.
\end{cases}
\end{equation*}
Notice that when talking about children,
we must indicate on which level $i$ we
consider the set $A$, since $A$ may be a set that
exists in several consecutive partitions.
In this case, one has $\mathcal{C}_i(A) = \{A\}$
for all levels $i$ such that $A\in \mathcal{A}_i$,
except for the most fine-grained one (smallest
$i$ such that $A\in \mathcal{A}_i$).

Our algorithm greedily constructs what we call
a removal pattern.
\begin{definition}[Removal pattern]
A \emph{removal pattern}
$\mathcal{W} = \{(W_1,i_1), \dots, (W_\beta, i_\beta)\}$
is a family of tuples, where
$i_1,\dots, i_\beta \in \{-1,\dots, p-1\}$,
$W_q\in \mathcal{A}_{i_q}$ for $q\in [\beta]$, and
$W_1,\dots, W_\beta$ are all disjoint sets.
\end{definition}
To each removal pattern we assign a set of corresponding
edges $R(\mathcal{W})$ to be removed:
\begin{equation*}
R(\mathcal{W}) = \bigcup_{(W,i)\in \mathcal{W}}
 \{e\in U_{\leq i} \mid |e\cap W|=1\},
\end{equation*}
where $|e\cap W|$ counts the number of endpoint
that $e$ has in $W$. In general, we treat an
edge $e=\{u,v\}$
as a set containing its two endpoints $u$
and $v$.

The motivation for the use of a removal pattern $\mathcal{W}$
to define an interdiction set, is
that when removing all edges $U_{\leq i}$ that touch $W_q$,
we have locally the same impact on the levels
$-1,\dots, i$ as $U$ has when removing it from the
graph. This allows us to exploit synergies between different
levels that exist when removing $U$.
For notational convenience, we denote the cost
of the edges $R(\mathcal{W})$ that
correspond to $\mathcal{W}$ by
\begin{equation*}
c(\mathcal{W}) = c(R(\mathcal{W})).
\end{equation*}

To decide which sets to add to $\mathcal{W}$, we define for
$i\in \{-1,\dots, p\}$ an auxiliary cost function
$\kappa_i: \mathcal{A}_i \rightarrow \mathbb{Z}_{\geq 0}$
and an auxiliary impact function
$g_i: \mathcal{A}_i \rightarrow \mathbb{Z}_{\geq 0}$
as follows: Let $A\in \mathcal{A}_i$, then
\begin{align*}
\kappa_i(A) &=   c(\{e\in U_{\leq i} \mid |e\cap A|=1\})
                +2c(\{e\in U_{\leq i} \mid |e\cap A|=2\}),\\
g_i(A)      &= |\{D\in \mathcal{A}_{-1} \mid D \subseteq A\}| +
          \sum_{\ell=0}^{i} 2^{\ell} \cdot
         |\{D\in \mathcal{A}_{\ell}\mid D\subseteq A\}|.
\end{align*}
Notice that for $i\in \{0,\dots, p\}$,
\begin{align}
\kappa_i(A) &\geq 
 \sum_{C\in \mathcal{C}_i(A)} \kappa_{i-1}(C), \text{ and}
\label{eq:kappaRecursive}\\
g_i(A)  &= 2^{i} + |\{D\in \mathcal{A}_{-1} \mid D\subseteq A\}|
         + \sum_{\ell=0}^{i-1} 2^{\ell}
         |\{D\in \mathcal{A}_{\ell}\mid D\subseteq A\}|
        = 2^{i} + \sum_{C\in\mathcal{C}_i(A)}
            g_{i-1}(C).\label{eq:gRecursive}
\end{align}
These recursive relations are a main reason
why we use $g_i$ and $\kappa_i$ as proxys for measuring
locally the impact and cost of the removal set $U$.
Moreover we have the following basic properties.
\begin{lemma}
\begin{align}
g_p(V)-2^{p+1} &= \val(U),\label{eq:gpV}\\
\kappa_p(V) &= 2c(U).\label{eq:kappapV}
\end{align}
\end{lemma}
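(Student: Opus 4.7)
The plan is to unfold both definitions at $A=V$ and match them against the formulas for $\val(U)$ and $c(U)$. Both identities are essentially bookkeeping, so the main ``obstacle'' is just keeping track of which sums contribute what; there is no real combinatorial trick.

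For \eqref{eq:kappapV}, I would start with the easier identity. Since $U\subseteq E_{\leq p-1}\subseteq E_{\leq p}$, we have $U_{\leq p}=U$. Every edge of $U$ has both of its endpoints in $V$, so $|e\cap V|=2$ for each $e\in U_{\leq p}$ and the first term in the definition of $\kappa_p(V)$ vanishes. Plugging this into the definition of $\kappa_p$ immediately yields $\kappa_p(V)=2c(U)$.

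For \eqref{eq:gpV}, I would first observe that for any $\ell\in\{-1,\dots,p\}$, the condition $D\subseteq V$ is automatic, so $|\{D\in\mathcal{A}_\ell\mid D\subseteq V\}|=|\mathcal{A}_\ell|=\sigma(E_{\leq\ell}\setminus U)$ (with the convention $E_{\leq -1}=E_{-1}$). Thus
\begin{equation*}
g_p(V)=\sigma(E_{-1}\setminus U)+\sum_{\ell=0}^{p}2^{\ell}\sigma(E_{\leq\ell}\setminus U).
\end{equation*}
Comparing this term by term with the formula for $\val(U)$ in \eqref{eq:valLevels}, the difference $g_p(V)-\val(U)$ collects one ``$+1$'' per level together with the top-level contribution:
\begin{equation*}
g_p(V)-\val(U)=1+\sum_{i=0}^{p-1}2^{i}+2^{p}\sigma(E_{\leq p}\setminus U).
\end{equation*}
The geometric sum equals $2^p-1$, so the first two terms together give $2^p$. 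For the last term I would invoke the standing assumption that $(V,E_p)$ is connected and that $U\cap E_p=\emptyset$, which forces $(V,E\setminus U)=(V,E_{\leq p}\setminus U)$ to be connected and hence $\sigma(E_{\leq p}\setminus U)=1$. This gives $g_p(V)-\val(U)=2^p+2^p=2^{p+1}$, which is exactly \eqref{eq:gpV}.

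Neither step requires the recursive relations \eqref{eq:kappaRecursive} and \eqref{eq:gRecursive}; those are reserved for the inductive arguments later in the analysis. The only place where one has to be a little careful is remembering that $E_{\leq p}=E$ and that the top-level graph is connected after removing $U$, which is precisely where the additive $2^{p+1}$ correction in \eqref{eq:gpV} comes from.
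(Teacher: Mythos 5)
Your proof is correct and follows essentially the same route as the paper: unfold both definitions at $A=V$, use $|\mathcal{A}_\ell|=\sigma(E_{\leq\ell}\setminus U)$, and account for the $2^{p+1}$ shift against \eqref{eq:valLevels}. If anything, you are slightly more explicit than the paper in noting that the $\ell=p$ term contributes $2^p$ because $\sigma(E_{\leq p}\setminus U)=1$, a fact the paper's computation uses implicitly.
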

\ifbool{shortVersion}{}{%
\begin{proof}
Equation~\eqref{eq:gpV} holds since 
\begin{align*}
g_p(V) &= |\{D\in \mathcal{A}_{-1} \mid D\subseteq V\}|
  + \sum_{\ell=0}^{p} 2^\ell \cdot
    |\{D\in \mathcal{A}_\ell \mid D\subseteq V\}|\\
  &= |\mathcal{A}_{-1}|
    + \sum_{\ell=0}^p 2^\ell\cdot |\mathcal{A}_\ell|\\
  &= \sigma(E_{-1}\setminus U)
    + \sum_{\ell=0}^p 2^\ell\cdot \sigma(E_{\leq \ell}\setminus U)\\
  &= (\sigma(E_{-1}\setminus U)-1)
    + \sum_{\ell=0}^p 2^\ell\cdot
       (\sigma(E_{\leq \ell}\setminus U)-1) + 2^{p+1}\\
  &= \val(U) + 2^{p+1}.
\end{align*}
Furthermore,~\eqref{eq:kappapV} follows immediately
from the definition of $\kappa_i$ and the
observation that $U_{\leq p} = U$:
\begin{align*}
\kappa_p(V) =  c(%
\underbrace{\{e\in U_{\leq p} \mid e\cap V = 1\}}_{=\emptyset})
 + 2c(\underbrace{\{e\in U_{\leq p} \mid |e\cap V|=2\}}_{=U_{\leq p}})
  = 2c(U).
\end{align*}

\end{proof}
}

For $i\in \{-1,\dots, p\}$ and $A\in \mathcal{A}_i$,
we define the \emph{auxiliary efficiency} of $A$ by
\begin{equation*}
\rho_i(A) = \frac{g_i(A)}{\kappa_i(A)},
\end{equation*}
with the convention that $\rho_i(A)=\infty$ if
$\kappa_i(A)=0$.
Our algorithm, as described in
Algorithm~\ref{alg:mainAlg}, adds sets to $\mathcal{W}$
iteratively starting at level $p-1$ and descending to
level $-1$. Among the sets considered in each level,
preference is given to sets with higher auxiliary
efficiency.
In the following we will show that the interdiction
set $R(\mathcal{W})$ returned by Algorithm~\ref{alg:mainAlg}
satisfies the conditions of Theorem~\ref{thm:mainAlgThm}.

\begin{algorithm}[h]
\DontPrintSemicolon
$\mathcal{W}=\emptyset$\;
$\ell=p-1$  \tcp*{current level}
$A=V$ \tcp*{current vertex set to break into
components on levels $\leq \ell$}
\While{$\ell\neq -2$}{
Let $\mathcal{C}_{\ell+1}(A) = \{Q_1,\dots, Q_h\}$
, where the numbering is chosen such that
\vspace*{-0.5em}
\begin{equation*}
\rho_\ell(Q_1) \geq \rho_\ell(Q_2) \geq
  \dots \geq \rho_\ell(Q_h).
\end{equation*}\;
\vspace*{-1.5em}
Let
\vspace*{-0.5em}
\begin{equation*}
s =
  \max\left\{j\in \{0,\dots, h\}\;\Bigg\vert\;
     c\left(\mathcal{W}\cup
         \{(Q_k, \ell) \mid k\in [j]\}
           \right) \leq B\right\}.
\end{equation*}\;
\vspace*{-1.5em}
Set
\begin{equation*}
\mathcal{W} = \mathcal{W} \cup
  \{(Q_k,\ell) \mid k\in [s]\}.
\end{equation*}\;
\vspace*{-1.5em}
\eIf{$s < h$}{
$\ell = \ell - 1$\;
$A=Q_{s+1}$\;
}{
$\ell=-2$ \quad (i.e., leave the while-loop) 
}
}
\Return $R(\mathcal{W})$\;

\caption{Construction of interdiction set $R$
fulfilling conditions of Theorem~\ref{thm:mainAlgThm}.}
\label{alg:mainAlg}
\end{algorithm}

\section{Analysis of the algorithm}\label{sec:analysis}

We first formalize a particular structure of the removal pattern
returned by Algorithm~\ref{alg:mainAlg} which follows
immediately from the fact that Algorithm~\ref{alg:mainAlg}
considers elements to add to $\mathcal{W}$ with respect
to decreasing order of their auxiliary efficiencies.

\begin{definition}[efficient removal pattern]
\label{def:effRemovalPattern}
Let $\mathcal{W}$ be a removal pattern.
$\mathcal{W}$ is called \emph{efficient} if for
every $i\in \{0,\dots, p\}$ and $A\in \mathcal{A}_i$,
one of the following holds:
\vspace{-0.5em}
\begin{enumerate}[(i)]
\setlength\itemsep{-0.5em}
\item\label{item:ERPNoDesc}
No descendant of $A$ is contained in $\mathcal{W}$, i.e.,
for every $\ell \in \{-1,\dots, i-1\}$ and $D\in \mathcal{A}_{\ell}$
with $D\subseteq A$, we have $(D,\ell)\not\in \mathcal{W}$, or

\item\label{item:ERPDesc}
all sets $(W,i')\in \mathcal{W}$ for $i'\in \{-1,\dots, i-1\}$
are descendants of $(A,i)$.
Moreover, there is
a numbering of the elements in 
$\mathcal{C}_i(A)$, say $\mathcal{C}_i(A)=\{Q_1,\dots, Q_h\}$,
and $s \in \{0,\dots, h\}$ such that
$\rho_i(Q_1)\geq \dots \geq \rho_i(Q_h)$ and the following
holds:

\vspace{-0.5em}
\begin{itemize}
\item $(Q_k,i-1)\in \mathcal{W}$ for $k\in \{1,\dots, s\}$,

\item $(Q_k,i-1)\not\in \mathcal{W}$ for $k\in \{s+1,\dots, h\}$,

\item
all tuples in $\mathcal{W}$ on levels $\{-1,\dots, i-2\}$
are descendants of $(Q_{s+1},i-1)$.
In particular, if $s=h$, then $\mathcal{W}$ contains no
tuples on levels $\{-1,\dots, i-2\}$.
\end{itemize}
\end{enumerate}
\end{definition}

Clearly, Algorithm~\ref{alg:mainAlg} returns an
efficient removal pattern.
The key motivation for concentrating on efficient removal patterns
is that we can relate, for any efficient removal pattern $\mathcal{W}$,
its corresponding value $\val(R)$, where $R=R(\mathcal{W})$,
to its cost $c(R)$.
To do so, we first introduce variants $\kappa_i^\mathcal{W}$ and
$g_i^{\mathcal{W}}$ of the auxiliary cost and impact
functions $\kappa_i$ and $g_i$, that measure cost and impact
of the efficient removal pattern $\mathcal{W}$.
In what follows, let $\mathcal{W}$ be an efficient removal pattern
with corresponding removal set $R=R(\mathcal{W})$.

As usual we use the notation
$R_{\leq i}=R\cap E_{\leq i}$ for
$i\in \{-1,\dots, p\}$.
For $\ell\in \{-1,\dots, p\}$ we define
$\mathcal{S}_\ell\subseteq \mathcal{A}_{\ell}$
to be all sets of $\mathcal{A}_\ell$ that are
descendants of sets added to $\mathcal{W}$, i.e., 
\begin{equation*}
\mathcal{S}_\ell = \{A\in \mathcal{A}_\ell \mid
\exists (W,i)\in \mathcal{W} \text{ with }
i\geq \ell \text{ and } A\subseteq W\}.
\end{equation*}
Notice that contrary to $\mathcal{A}_\ell$, the family
$\mathcal{S}_\ell$ is generally not a partition.

Similarly to the definitions of the auxiliary impact
function $g_i$ and auxiliary cost function
$\kappa_i$, which are defined in terms of the
set $U$, we define corresponding functions
$g^\mathcal{W}_i$ and
$\kappa^{\mathcal{W}}_i$ for the efficient removal
pattern $\mathcal{W}$.
For $i\in \{-1,\dots, p\}$
and $A\in \mathcal{A}_i$, let
\begin{align*}
\kappa_i^{\mathcal{W}}(A)&=
 \sum_{\substack{(W,j)\in \mathcal{W} \text{ with}\\
    W\subseteq A, j\leq i}}
  \kappa_j(W). \\
g_i^{\mathcal{W}}(A) &=
 \sum_{\substack{(W,j)\in \mathcal{W} \text{ with}\\
    W\subseteq A, j\leq i}}
  g_j(W)
 = |\{S\in \mathcal{S}_{-1} \mid S\subseteq A\}| +
  \sum_{\ell=0}^i 2^{l}
  |\{S\in \mathcal{S}_\ell \mid S\subseteq A\}|.
\end{align*}

The functions $\kappa_i^{\mathcal{W}}$ and
$g_i^{\mathcal{W}}$ are thus analogous to
$\kappa_i$ and $g_i$ with the difference that
they only consider sets of the partitions $\mathcal{A}_i$
that are subsets of a set in the removal pattern
$\mathcal{W}$.
Since each edge in $R$ crosses at least one
of the sets in the efficient removal pattern
$\mathcal{W}$, we obtain
\begin{equation}\label{eq:kappaWpV}
\kappa^{\mathcal{W}}_p(V) \geq c(R).
\end{equation}

Notice that if $(A,i)\in \mathcal{W}$ then
$\kappa_i^{\mathcal{W}}(A) = \kappa_i(A)$
and
$g_i^{\mathcal{W}}(A) = g_i(A)$.
Furthermore, for $i\in \{0,\dots, p-1\}$
and $(A,i)\not\in \mathcal{W}$ we have
\begin{align}
\kappa_i^{\mathcal{W}}(A) &=
 \sum_{C\in \mathcal{C}_i(A)}
   \kappa_{i-1}^\mathcal{W}(C) \text{ and}%
  \label{eq:kappaRec}\\
g_i^{\mathcal{W}}(A) &=
 \sum_{C\in \mathcal{C}_i(A)}
   g^{\mathcal{W}}_{i-1}(C).
  \label{eq:gRec}
\end{align}

The following shows a basic lower bound on $\val(R)$
in terms of $g_i^\mathcal{W}$.
\begin{proposition}\label{prop:impGW}
Let $\mathcal{W}$ be an efficient removal pattern
and $R=R(\mathcal{W})$ the corresponding removal set.
Then
\begin{equation*}
\val(R) \geq g^{\mathcal{W}}_p(V) - 2^{p-1}%
.
\end{equation*}
\end{proposition}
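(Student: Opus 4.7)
My plan is to convert the inequality into a per-level comparison and then extract $2^{p-1}$ units of slack from vertices that sit outside every added set.

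Setting $c_{-1}:=1$ and $c_\ell:=2^\ell$ for $\ell\in\{0,\dots,p-1\}$ (so that $\sum_{\ell=-1}^{p-1}c_\ell=2^p$), formula~\eqref{eq:valLevels} becomes $\val(R)=\sum_{\ell=-1}^{p-1}c_\ell\,\sigma(E_{\leq\ell}\setminus R)-2^p$. Since no tuple in $\mathcal{W}$ has level $p$, we have $\mathcal{S}_p=\emptyset$, and the second expression for $g_i^{\mathcal{W}}$ at $i=p,A=V$ gives $g_p^{\mathcal{W}}(V)=\sum_{\ell=-1}^{p-1}c_\ell|\mathcal{S}_\ell|$. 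The target inequality therefore reduces to $\sum_{\ell=-1}^{p-1}c_\ell\bigl(\sigma(E_{\leq\ell}\setminus R)-|\mathcal{S}_\ell|\bigr)\geq 2^{p-1}$.

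The core structural step will be the level-wise identity
$$
\sigma(E_{\leq\ell}\setminus R)=|\mathcal{S}_\ell|+\sigma\bigl((V,E_{\leq\ell}\setminus R)[\mathrm{Rest}_\ell]\bigr),\qquad
\mathrm{Rest}_\ell:=V\setminus\!\!\!\bigcup_{\substack{(W,i)\in\mathcal{W}\\ i\geq\ell}}\!\!\!W,
$$
where $[\cdot]$ denotes the vertex-induced subgraph. As spelled out in the motivation preceding the proposition, for each $(W,i)\in\mathcal{W}$ the interdiction set $R(\mathcal{W})$ strips every $U_{\leq i}$-edge incident to $W$. Combined with the fact that $W\in\mathcal{A}_i$ implies every $E_{\leq\ell}$-edge crossing $W$ already lies in $U$, this isolates $W$ from $V\setminus W$ in $(V,E_{\leq\ell}\setminus R)$ for every $\ell\leq i$ and makes the induced subgraph on $W$ coincide with $(W,E_{\leq\ell}[W]\setminus U)$, whose components are exactly the $\mathcal{A}_\ell$-sets contained in $W$. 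Summing over the pairwise disjoint $W$'s in $T_\ell:=\{(W,i)\in\mathcal{W}:i\geq\ell\}$ accounts for precisely $|\mathcal{S}_\ell|$ components inside, and the remaining components of $(V,E_{\leq\ell}\setminus R)$ live in $\mathrm{Rest}_\ell$.

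The last step is to show that $\mathrm{Rest}_{p-1}\neq\emptyset$, which already delivers the full $2^{p-1}$ of slack. At the first iteration of Algorithm~\ref{alg:mainAlg} the candidate children are $\mathcal{C}_p(V)=\mathcal{A}_{p-1}$, a partition of $V$. If every one of them were added, then every edge of $U_{\leq p-1}=U$ would touch some added set and $R$ would equal $U$, giving $c(R)=c(U)>B$ and contradicting the budget test in the algorithm. Hence at least one child $Q$ of $V$ at level $p-1$ is not added and lies entirely in $\mathrm{Rest}_{p-1}$, so $\sigma((V,E_{\leq p-1}\setminus R)[\mathrm{Rest}_{p-1}])\geq 1$. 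Combined with the non-negativity of the other level contributions guaranteed by the identity, this yields the required $c_{p-1}=2^{p-1}$.

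The main obstacle I expect is proving the level-wise identity cleanly: it requires careful bookkeeping of the edges that $R$ removes both on the boundary of and inside each added $W$, and uses crucially the disjointness of the sets in a removal pattern (Definition~\ref{def:effRemovalPattern}) to avoid double-counting components across different $(W,i)\in T_\ell$.
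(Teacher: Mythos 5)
Your argument is correct and follows the paper's proof essentially verbatim: your level-wise identity is exactly the paper's observation that every $S\in\mathcal{S}_\ell$ is a full connected component of $(V,E_{\le\ell}\setminus R)$ (so $\sigma(E_{\le\ell}\setminus R)\ge|\mathcal{S}_\ell|$, with one extra component whenever the added sets do not cover $V$), and your nonemptiness of $\mathrm{Rest}_{p-1}$ is the paper's remark that $\mathcal{S}_{p-1}$ cannot partition $V$ since that would force $R=U$, contradicting $c(R)\le B<c(U)$. The only point worth noting is that, like the paper's own proof, your component-isolation step requires reading $R(\mathcal{W})$ as deleting every $U_{\le i}$-edge \emph{incident} to an added set $W$ (as in the prose motivation preceding the definition), not only the edges with exactly one endpoint in $W$ as in the displayed formula.
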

\ifbool{shortVersion}{}{%
\begin{proof}

For each $i\in \{-1,\dots, p-1\}$, 
the number $\sigma(E_{\leq i}\setminus R)$
of connected components
of $(V, E_{\leq i}\setminus R)$
is at least $|\mathcal{S}_i|$, since
each $S\in \mathcal{S}_i$ is a connected
component of $(V,E_{\leq i}\setminus R)$.
Furthermore, only if $\mathcal{S}_i$ is
a partition of $V$ we have
$\sigma(E_{\leq i}\setminus U)=|\mathcal{S}_i|$,
otherwise there is at least one more connected
component in $(V,E_{\leq i}\setminus R)$,
and thus $\sigma(E_{\leq i}\setminus U)
> |\mathcal{S}_i|$.
Notice that $\mathcal{S}_{p-1}$ does not
form a partition of $V$, since this would
imply $R=U$ which contradicts
$c(R) \leq B < c(U)$. Hence,
$\sigma(E_{\leq p-1}\setminus U) > |\mathcal{S}_{p-1}|$
and we obtain
\begin{align*}
\val(R) &= \sigma(E_{-1}\setminus U) - 1
   + \sum_{i=0}^{p-1}
  2^{i} \cdot (\sigma(E_{\leq i}\setminus U) - 1) \\
     &\geq 2^{p-1}\cdot |\mathcal{S}_{p-1}|
   + |\mathcal{S}_{-1}| - 1
   + \sum_{i=0}^{p-2} 2^{i}\cdot (|\mathcal{S}_{i}|-1)\\
     &= |\mathcal{S}_{-1}| + \left(\sum_{i=0}^{p-1} 2^{i} \cdot
        |\mathcal{S}_i|\right) - 2^{p-1}\\
     &= g^{\mathcal{W}}_p(V) - 2^{p-1}.
\end{align*}
\end{proof}
}

The following lemma relates cost and
impact function for the sets $U$ and $R$.

\begin{lemma}\label{lem:gToGW}
Let $\mathcal{W}$ be an efficient removal set,
let $i\in \{-1,\dots, p\}$, and let
$A\in \mathcal{A}_i$ such that
$\kappa_i(A)>0$. Then
\begin{equation*}
\frac{\kappa_i^{\mathcal{W}}(A)}{\kappa_i(A)}
 \cdot \left( g_i(A) - 2^i \right)
\leq g_i^{\mathcal{W}}(A)
   + 2^{i}.
\end{equation*}
\end{lemma}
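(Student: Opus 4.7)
The plan is to prove the lemma by induction on $i$, starting from the base case $i=-1$ and ascending. The base case is easy: for $A\in\mathcal{A}_{-1}$ one has $g_{-1}(A)=1$, and the only candidate $(W,j)\in\mathcal{W}$ with $W\subseteq A$ and $j\leq-1$ is $(A,-1)$ itself; a short case split on whether $(A,-1)\in\mathcal{W}$ verifies the inequality (with $2^{i}$ interpreted as $\tfrac{1}{2}$).

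For the inductive step at level $i\geq 0$ I split into three cases driven by Definition~\ref{def:effRemovalPattern}. If $(A,i)\in\mathcal{W}$, disjointness of $\mathcal{W}$ forces $\kappa_i^{\mathcal{W}}(A)=\kappa_i(A)$ and $g_i^{\mathcal{W}}(A)=g_i(A)$, and the inequality collapses to $-2^i\leq 2^i$. If neither $(A,i)$ nor any descendant of $A$ lies in $\mathcal{W}$, the left-hand side vanishes. The substantial case is $(A,i)\notin\mathcal{W}$ with some descendant of $A$ in $\mathcal{W}$, in which Definition~\ref{def:effRemovalPattern}(ii) yields an efficiency-sorted enumeration $\mathcal{C}_i(A)=\{Q_1,\dots,Q_h\}$ (with $\rho_{i-1}(Q_1)\geq\dots\geq\rho_{i-1}(Q_h)$) and a threshold $s\in\{0,\dots,h\}$: the children $Q_1,\dots,Q_s$ are fully present in $\mathcal{W}$, all remaining tuples of $\mathcal{W}$ below $A$ lie inside $Q_{s+1}$, and $Q_{s+2},\dots,Q_h$ carry no tuples. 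The sub-case $s=h$ closes by direct calculation using \eqref{eq:kappaRecursive}--\eqref{eq:gRec}.

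For $s<h$, write $\kappa_k=\kappa_{i-1}(Q_k)$, $G_k=g_{i-1}(Q_k)$, and $\beta=\kappa_{i-1}^{\mathcal{W}}(Q_{s+1})/\kappa_{s+1}\in[0,1]$ (the degeneracy $\kappa_{s+1}=0$ forces $\kappa_{i-1}^{\mathcal{W}}(Q_{s+1})=0$ and is handled separately). The recursive identities \eqref{eq:kappaRec}--\eqref{eq:gRec} and \eqref{eq:kappaRecursive}--\eqref{eq:gRecursive} expand $\kappa_i^{\mathcal{W}}(A)=\sum_{k\leq s}\kappa_k+\beta\kappa_{s+1}$, $g_i^{\mathcal{W}}(A)=\sum_{k\leq s}G_k+g_{i-1}^{\mathcal{W}}(Q_{s+1})$, $\kappa_i(A)\geq\sum_{k=1}^h\kappa_k$, and $g_i(A)-2^i=\sum_{k=1}^h G_k$. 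Two ingredients then combine: the inductive hypothesis applied to $Q_{s+1}$ rearranges to $g_{i-1}^{\mathcal{W}}(Q_{s+1})\geq\beta G_{s+1}-(\beta+1)\,2^{i-1}$, and, since the $Q_k$ are sorted by decreasing efficiency $G_k/\kappa_k$, a weighted-average inequality (a high-efficiency prefix beats the overall average efficiency) gives
\[
\frac{\sum_{k=1}^s\kappa_k+\beta\kappa_{s+1}}{\sum_{k=1}^h\kappa_k}\cdot\sum_{k=1}^h G_k \;\leq\; \sum_{k=1}^s G_k+\beta G_{s+1}.
\]
Substituting these into the target inequality reduces it to $2^i-(\beta+1)\,2^{i-1}=(1-\beta)\,2^{i-1}\geq 0$, which holds because $\beta\leq 1$.

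The main obstacle is the bookkeeping of the additive $2^\ell$ corrections across levels: the induction inevitably leaks a loss of order $2^{i-1}$ when passed from $Q_{s+1}$ up to $A$, and the proof succeeds only because the doubling of level weights $2^i=2\cdot 2^{i-1}$ provides exactly the slack required to absorb both the inherited $2^{i-1}$ and its $\beta$-scaled companion for every $\beta\in[0,1]$.
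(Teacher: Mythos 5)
Your proposal is correct and follows essentially the same route as the paper's proof: induction on the level, dispatching the trivial cases ($(A,i)\in\mathcal{W}$ or no descendant in $\mathcal{W}$), then for the main case using the efficiency-sorted children with threshold $s$, the prefix-beats-average inequality (the paper's Lemma~\ref{lem:prefixDom}, with your $\beta$ playing the role of its $\lambda$), and the induction hypothesis applied to the boundary child $Q_{s+1}$, with the factor-of-two gap between $2^i$ and $2^{i-1}$ absorbing the inherited loss exactly as you describe. The only cosmetic difference is that the paper folds your separate $s=h$ sub-case into the general argument by taking $q=\min\{s+1,h\}$.
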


\ifbool{shortVersion}{}{%
To prove Lemma~\ref{lem:gToGW}, we need
the following basic result, which is proven in
Appendix~\ref{app:prefixDom}.
\begin{lemma}\label{lem:prefixDom}
Let $k\in \mathbb{Z}_{>0}$, and let
$a_j,b_j \geq 0$ for $j\in [k]$ be reals
satisfying
$\frac{a_1}{b_1}\geq \dots \geq \frac{a_k}{b_k}$,
where we interpret $\frac{a}{b}=\infty$ if $b=0$,
independent of whether $a=0$.
Let $\lambda\in [0,1]$.
Then for any $q\in [k]$ with
$\left(\sum_{j=1}^{q-1} b_j\right) + \lambda b_q >0$ 
we have 
\begin{equation*}
\frac{\sum_{j=1}^k a_j}{\sum_{j=1}^k b_j}
\leq \frac{\left(\sum_{j=1}^{q-1} a_j\right) + \lambda a_q}%
{\left( \sum_{j=1}^{q-1} b_j\right) + \lambda b_q}.
\end{equation*}
\end{lemma}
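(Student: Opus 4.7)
The plan is to rewrite the LHS as a mediant of two ratios, one of which is exactly the RHS, and then reduce the inequality to a comparison pivoting on $a_q/b_q$. Specifically, set $N_1 = \sum_{j=1}^{q-1} a_j + \lambda a_q$ and $D_1 = \sum_{j=1}^{q-1} b_j + \lambda b_q$, so the RHS equals $N_1/D_1$. Set $N_2 = (1-\lambda) a_q + \sum_{j=q+1}^k a_j$ and $D_2 = (1-\lambda) b_q + \sum_{j=q+1}^k b_j$, so the LHS equals $(N_1 + N_2)/(D_1 + D_2)$. The classical mediant inequality says this last quantity lies between $N_1/D_1$ and $N_2/D_2$; hence it suffices to show $N_2/D_2 \leq N_1/D_1$.

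Before doing so I would settle the degenerate cases. Under the convention $a/0 = \infty$, the decreasing order forces the indices $j$ with $b_j = 0$ to form a prefix $\{1,\dots,r\}$ of $[k]$: if $b_j = 0$ and $j' < j$, then $a_{j'}/b_{j'} \geq \infty$, so $b_{j'} = 0$ too. Combined with the hypothesis $D_1 > 0$, this rules out $b_q = 0$, since otherwise $b_1 = \dots = b_q = 0$ and $D_1 = 0$. Hence $b_q > 0$, and by the same sorting argument $b_j > 0$ for every $j \geq q$. The only way to then have $D_2 = 0$ is $q = k$ with $\lambda = 1$, in which case $N_2 = D_2 = 0$ and the lemma is trivial.

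In the remaining case both $N_1/D_1$ and $N_2/D_2$ are finite and the conclusion follows from two further applications of the mediant inequality pivoting on $a_q/b_q$. For the prefix ratio, every term with $b_j > 0$ (i.e., $j \geq r+1$, $j \leq q-1$) has ratio $\geq a_q/b_q$ by the sorting, and the $\lambda \cdot (a_q, b_q)$ contribution has ratio exactly $a_q/b_q$; any zero-denominator terms $j \leq r$ only add to the numerator, which can only push the ratio up. Thus $N_1/D_1 \geq a_q/b_q$. Symmetrically, the terms with $j \geq q+1$ together with the $(1-\lambda) \cdot (a_q, b_q)$ piece all have ratios $\leq a_q/b_q$, yielding $N_2/D_2 \leq a_q/b_q$. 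Chaining gives $N_2/D_2 \leq N_1/D_1$, as required.

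The main obstacle is handling the $b_j = 0$ indices cleanly under the $a/0 = \infty$ convention, which is why the hypothesis $\left(\sum_{j=1}^{q-1} b_j\right) + \lambda b_q > 0$ is needed: once the zero-denominator indices are identified as a prefix, this assumption is exactly what guarantees $b_q > 0$ and hence that the mediant argument goes through. Everything else is routine bookkeeping.
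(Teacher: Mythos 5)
Your proposal is correct and follows essentially the same route as the paper: the paper defines $\beta=N_1/D_1$ and $\gamma=N_2/D_2$, shows $\beta\geq a_q/b_q\geq\gamma$ since each is a convex combination of the sorted ratios, and writes the left-hand side as a weighted average of $\beta$ and $\gamma$ — which is exactly your mediant argument. The only cosmetic difference is in handling indices with $b_j=0$ (the paper discards them up front, you keep them and note they form a prefix that only increases the numerator); both treatments are fine.
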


\begin{proof}[Proof of Lemma~\ref{lem:gToGW}]
Let $i\in \{-1,\dots, p\}$ and
$A\in \mathcal{A}_i$ such that
$\kappa_i(A)>0$.
The result trivially holds if
$\kappa_i^{\mathcal{W}}(A)=0$; we thus
assume $\kappa_i^{\mathcal{W}}(A)>0$.
We prove the lemma by induction on $i$,
starting at $i=-1$.
First observe that 
if $(A,i)\in \mathcal{W}$,
then $g_i^{\mathcal{W}}(A)=g_i(A)$ and
$\kappa_i^{\mathcal{W}}(A)=\kappa_i(A)$,
and the result follows trivially.
This observation also covers the base case
$i=-1$ of the induction as
$\kappa_{-1}^{\mathcal{W}}(A)>0$ implies 
$(A,-1)\in \mathcal{W}$.

Thus, we assume from now on $i>-1$ and
$(A,i)\not\in \mathcal{W}$.
Since $\kappa_i^{\mathcal{W}}(A)>0$, the efficient removal
pattern $\mathcal{W}$ contains at least one descendant
of $(A,i)$. Hence, point~\eqref{item:ERPDesc} of the definition
of an efficient removal pattern, i.e.,
Definition~\ref{def:effRemovalPattern}, holds for $A\in \mathcal{A}_i$.
Let $\mathcal{C}_i(A)=\{Q_1,\dots, Q_h\}$,
where the numbering is chosen according to 
Definition~\ref{def:effRemovalPattern},
and let $s\in \{0,\dots, h\}$ be the
index as claimed by Definition~\ref{def:effRemovalPattern}.

Using~\eqref{eq:gRecursive}, we deduce
\begin{align}
\frac{\kappa^{\mathcal{W}}_i(A)}{\kappa_i(A)}
\left( g_i(A) - 2^i \right) &=
\frac{\kappa^{\mathcal{W}}_i(A)}{\kappa_i(A)}
\sum_{j=1}^h g_{i-1}(Q_j)
&& \text{(by~\eqref{eq:gRecursive})}
\notag\\
&\leq \frac{\kappa_i^{\mathcal{W}}(A)}%
 {\sum_{j=1}^h\kappa_{i-1}(Q_j)}
\sum_{j=1}^h g_{i-1}(Q_j).
  && \text{(by~\eqref{eq:kappaRecursive})}
\label{eq:gOverK1}
\end{align}

In a next step we will apply Lemma~\ref{lem:prefixDom}
with parameters $q=\min\{s+1,h\}$ and 
$\lambda={\kappa_{i-1}^{\mathcal{W}}(Q_{q})}/%
{\kappa_{i-1}(Q_{q})}$
to the ratio $\sum_{j=1}^h g_{i-1}(Q_j)/%
\sum_{j=1}^h \kappa_{i-1} (Q_j)$
in~\eqref{eq:gOverK1}, i.e., the terms
in the terminology of Lemma~\ref{lem:prefixDom} are
$a_j=g_{i-1}(Q_j)$
and $b_j=\kappa_{i-1}(Q_j)$ for $j\in [h]$.
To do so, we first check that the conditions
of Lemma~\ref{lem:prefixDom} are fulfilled. More precisely,
we have to show that:
\vspace{-0.5em}
\begin{enumerate}[(i)]
\setlength{\itemsep}{-0.3em}
\item $\lambda$ is well defined, i.e., $\kappa_{i-1}(Q_q)>0$,

\item $\lambda\in [0,1]$, and

\item $(\sum_{j=1}^{q-1} \kappa_{i-1}(Q_j))
+ \lambda \kappa_{i-1}(Q_q) > 0$.
\end{enumerate}

First observe that since
$(A,i)\not\in \mathcal{W}$ we have
\begin{align}
\sum_{j=1}^{q} g_{i-1}^{\mathcal{W}}(Q_j) &= 
 \sum_{j=1}^h g_{i-1}^{\mathcal{W}}(Q_j)
   = g_{i}^{\mathcal{W}}(A),
&& \text{(second equality follows by~\eqref{eq:gRec})}
\label{eq:gIWA} \\
\sum_{j=1}^{q} \kappa_{i-1}^{\mathcal{W}}(Q_j) &= 
 \sum_{j=1}^h \kappa_{i-1}^{\mathcal{W}}(Q_j)
   = \kappa_{i}^{\mathcal{W}}(A),
&& \text{(second equality follows by~\eqref{eq:kappaRec})}
\label{eq:kappaIWA}
\end{align}
where the first equality in the above statements
follows from 
$\kappa_{i-1}^\mathcal{W}(Q_j) = 0 =
 g_{i-1}^\mathcal{W}(Q_j)$ for
$j\in \{q+1,\dots,h\}$, since none of the 
sets $Q_{q+1},\dots, Q_h$ or any of its descendants
are contained in $\mathcal{W}$, by definition of
an efficient removal pattern.

Notice that $\kappa^{\mathcal{W}}_i(A)>0$ implies
by~\eqref{eq:kappaIWA} that
there is a $\bar{j}\in [q]$ such
that $0<\kappa_{i-1}^\mathcal{W}(Q_{\bar{j}})\leq
\kappa_{i-1}(Q_{\bar{j}})$,
and hence
$\rho_{i-1}(Q_{\bar{j}})<\infty$.
Because the auxiliary efficiencies $\rho_{i-1}(Q_j)$
are nonincreasing in $j$, we have $\rho_{i-1}(Q_q)<\infty$
which is equivalent to $\kappa_{i-1}(Q_q)>0$.
Hence, $\lambda$ is well defined and since
$\kappa_{i-1}(Q_q)\geq \kappa^{\mathcal{W}}_{i-1}(Q_q)$
we have $\lambda\in [0,1]$.
Furthermore,
\begin{align*}
0 &< \kappa_i^{\mathcal{W}}(A)
  = \sum_{j=1}^q \kappa_{i-1}^{\mathcal{W}}(Q_j)
  = \left( \sum_{j=1}^{q-1} \kappa_{i-1} (Q_j)\right)
       + \lambda \kappa_{i-1}(Q_q),
\end{align*}
where the first equality follows from~\eqref{eq:kappaIWA}.
We can thus apply Lemma~\ref{lem:prefixDom}
to the ratio in~\eqref{eq:gOverK1} to obtain
\begin{equation}\label{eq:appRatioLem}
\begin{aligned}
\frac{\sum_{j=1}^h g_{i-1}(Q_j)}%
  {\sum_{j=1}^h \kappa_{i-1}(Q_j)}
 &\leq
\frac{\left(\sum_{j=1}^{q-1} g_{i-1}(Q_j)\right)
 + \lambda g_{i-1}(Q_q)}%
  {\left(\sum_{j=1}^{q-1} \kappa_{i-1}(Q_j)\right)
 + \lambda \kappa_{i-1}(Q_q)} \\
 &= \frac{\left(\sum_{j=1}^{q-1}
 g^{\mathcal{W}}_{i-1}(Q_j)\right)
 + \lambda g_{i-1}(Q_q)}%
  {\sum_{j=1}^{q}
     \kappa^{\mathcal{W}}_{i-1}(Q_j)},
\end{aligned}
\end{equation}
where the equality follows by
the definition of $\lambda$ in the
denominator, and by using the observation that
$(Q_j, i-1)\in \mathcal{W}$ for
$j\in \{1,\dots, q-1\}$,
which implies
$g_{i-1}^{\mathcal{W}}(Q_j)=g_{i-1}(Q_j)$ and
$\kappa_{i-1}^{\mathcal{W}}(Q_j)=\kappa_{i-1}(Q_j)$.
We thus obtain
\begin{align*}
\frac{\kappa^{\mathcal{W}}_i(A)}{\kappa_i(A)}
\left( g_i(A) - 2^i \right)
 &\leq
   \frac{\kappa_i^{\mathcal{W}}(A)}%
     {\sum_{j=1}^h\kappa_{i-1}(Q_j)}
      \sum_{j=1}^h g_{i-1}(Q_j)
  && \text{(by~\eqref{eq:gOverK1})}
\\
 &\leq
    \frac{\sum_{j=1}^q
    \kappa_{i-1}^{\mathcal{W}}(Q_j)}%
     {\sum_{j=1}^h\kappa_{i-1}(Q_j)}
      \sum_{j=1}^h g_{i-1}(Q_j)
  && \text{(by~\eqref{eq:kappaIWA})}
\\
 &\leq
    \left( \sum_{j=1}^{q-1}
    g_{i-1}^{\mathcal{W}}(Q_j)\right)
   + \lambda g_{i-1}(Q_q).
  && \text{(by~\eqref{eq:appRatioLem})}
\end{align*}
Applying the induction hypothesis
to 
$\lambda(g_{i-1}(Q_q) - 2^{i-1}) =
  \frac{\kappa^{\mathcal{W}}_{i-1}(Q_q)}{\kappa_{i-1}(Q_q)}
    (g_{i-1}(Q_q)-2^{i-1})$
we get 
\begin{align*}
\lambda g_{i-1}(Q_q)
  &\leq g_{i-1}^{\mathcal{W}}(Q_q) + 2^{i-1}(1+\lambda)
   && \text{(induction hypothesis)}\\
  &\leq g_{i-1}^{\mathcal{W}}(Q_q) + 2^i,
    && \text{($\lambda\leq 1$)}
\end{align*}
and hence
\begin{align*}
\frac{\kappa_i^{\mathcal{W}}(A)}{\kappa_i(A)}
 (g_i(A) - 2^i) &\leq
  \left(\sum_{j=1}^q g_{i-1}^{\mathcal{W}}(Q_j)\right)
  + 2^i
  \\
 &=
   g_{i}^\mathcal{W}(A) + 2^i
,
   && \text{(by~\eqref{eq:gIWA})}
\end{align*}
thus proving the lemma.
\end{proof}
}

\begin{lemma}\label{lem:impactGGuarantee}
Let $\mathcal{W}$ be an efficient removal pattern
with corresponding removal set $R=R(\mathcal{W})$. Then
\begin{equation*}
g_p^{\mathcal{W}}(V)
\geq \frac{1}{2} \frac{c(R)}{c(U)} \val(U)
 - 2^{p}.
\end{equation*}
\end{lemma}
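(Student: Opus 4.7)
The plan is to apply Lemma~\ref{lem:gToGW} directly at the top level, namely with $i = p$ and $A = V$, and then substitute in the explicit values of $g_p(V)$, $\kappa_p(V)$, and $\kappa_p^{\mathcal{W}}(V)$ supplied by the earlier identities.

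First I would verify that Lemma~\ref{lem:gToGW} applies at $i=p$, $A=V$. The hypothesis needed is $\kappa_p(V) > 0$, which holds because $\kappa_p(V) = 2c(U)$ by~\eqref{eq:kappapV} and $c(U) > B > 0$ by assumption on $U$. Invoking the lemma yields
\begin{equation*}
\frac{\kappa_p^{\mathcal{W}}(V)}{\kappa_p(V)}\bigl(g_p(V) - 2^p\bigr)
\;\leq\; g_p^{\mathcal{W}}(V) + 2^p,
\end{equation*}
which rearranges to a lower bound on $g_p^{\mathcal{W}}(V)$.

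The next step is to plug in the three quantities we already know. By~\eqref{eq:gpV}, $g_p(V) = \val(U) + 2^{p+1}$, so $g_p(V) - 2^p = \val(U) + 2^p \geq \val(U)$. By~\eqref{eq:kappapV}, $\kappa_p(V) = 2c(U)$. By~\eqref{eq:kappaWpV}, $\kappa_p^{\mathcal{W}}(V) \geq c(R)$. Substituting these into the rearranged inequality gives
\begin{equation*}
g_p^{\mathcal{W}}(V)
\;\geq\; \frac{c(R)}{2c(U)}\,\val(U) - 2^p
\;=\; \frac{1}{2}\,\frac{c(R)}{c(U)}\,\val(U) - 2^p,
\end{equation*}
which is exactly the claim.

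There is essentially no obstacle here: all the work has been done in Lemma~\ref{lem:gToGW}, whose inductive proof handles the delicate interaction between the levels via the efficient ordering and Lemma~\ref{lem:prefixDom}. The present lemma is just the specialization of that bound to the root $(V,p)$, together with the bookkeeping identities relating $g_p$, $\kappa_p$, and $\kappa_p^{\mathcal{W}}$ to $\val(U)$, $c(U)$, and $c(R)$ respectively. The only minor point worth noting is that the additive slack $2^p$ (rather than $2^{p+1}$) in the final inequality comes from absorbing the extra $2^p$ inside $g_p(V) - 2^p = \val(U) + 2^p$ on the left-hand side of the invocation of Lemma~\ref{lem:gToGW}, which exactly cancels against the $+2^p$ slack term in that lemma's right-hand side in a favorable direction.
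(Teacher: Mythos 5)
Your proposal is correct and follows exactly the paper's own argument: apply Lemma~\ref{lem:gToGW} at $i=p$, $A=V$, use $g_p(V)-2^p \geq \val(U)$ from~\eqref{eq:gpV}, and substitute $\kappa_p(V)=2c(U)$ and $\kappa_p^{\mathcal{W}}(V)\geq c(R)$ from~\eqref{eq:kappapV} and~\eqref{eq:kappaWpV}. The explicit check that $\kappa_p(V)>0$ is a nice touch that the paper leaves implicit.
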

\ifbool{shortVersion}{}{%
\begin{proof}
The statement follows from
\begin{align*}
g_p^{\mathcal{W}}(V) &\geq
 \frac{\kappa_p^{\mathcal{W}}(V)}{\kappa_p(V)}
   \cdot (g_p(V)-2^p) - 2^p
 && \text{(by Lemma~\ref{lem:gToGW})}\\
 &\geq
   \frac{\kappa_p^{\mathcal{W}}(V)}{\kappa_p(V)}\val(U) - 2^p
 && \text{($g_p(V)-2^p \geq g_p(V)-2^{p+1} = \val(U)$
      by~\eqref{eq:gpV})}\\
 &= \frac{1}{2} \frac{\kappa^{\mathcal{W}}_p(V)}{c(U)}
    \val(U) - 2^p
 && \text{(by~\eqref{eq:kappapV})}\\
 &\geq \frac{1}{2} \frac{c(R)}{c(U)}\val(U) - 2^p
 && \text{(by~\eqref{eq:kappaWpV})}.
\end{align*}

\end{proof}
}

Combining Proposition~\ref{prop:impGW} and
Lemma~\ref{lem:impactGGuarantee} we obtain
the following.
\begin{corollary}\label{corr:impR}
Let $\mathcal{W}$ be an efficient removal pattern with
corresponding removal set $R=R(\mathcal{W})$.
Then
\begin{equation*}
\val(R) \geq \frac{1}{2}\frac{c(R)}{c(U)}\val(U)-3\cdot 2^{p-1}.
\end{equation*}
\end{corollary}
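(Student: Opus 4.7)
The plan is straightforward: this corollary is essentially an arithmetic combination of the two preceding results, with no new ideas required. The first step is to invoke Proposition~\ref{prop:impGW}, which gives the lower bound $\val(R) \geq g_p^{\mathcal{W}}(V) - 2^{p-1}$, turning the task of bounding $\val(R)$ into the task of bounding the auxiliary quantity $g_p^{\mathcal{W}}(V)$ from below.

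Next, I would apply Lemma~\ref{lem:impactGGuarantee}, which directly provides $g_p^{\mathcal{W}}(V) \geq \tfrac{1}{2}\tfrac{c(R)}{c(U)}\val(U) - 2^{p}$. Chaining these two inequalities yields
\[
\val(R) \geq g_p^{\mathcal{W}}(V) - 2^{p-1}
         \geq \tfrac{1}{2}\tfrac{c(R)}{c(U)}\val(U) - 2^{p} - 2^{p-1}
         = \tfrac{1}{2}\tfrac{c(R)}{c(U)}\val(U) - 3\cdot 2^{p-1},
\]
which is exactly the claimed bound. The arithmetic works out because $2^{p}+2^{p-1} = 3\cdot 2^{p-1}$.

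There is no real obstacle here; the only thing to check is that the hypotheses of both results apply, namely that $\mathcal{W}$ is an efficient removal pattern and $R = R(\mathcal{W})$ is its associated removal set, both of which are exactly the assumptions of the corollary. All the genuine work has already been done in establishing Proposition~\ref{prop:impGW} (which relates $\val(R)$ to counts of descendant partition classes via formula~\eqref{eq:valLevels}) and Lemma~\ref{lem:impactGGuarantee} (which in turn rests on Lemma~\ref{lem:gToGW} and the identities~\eqref{eq:gpV},~\eqref{eq:kappapV},~\eqref{eq:kappaWpV}). So the proof amounts to a two-line chain of inequalities.
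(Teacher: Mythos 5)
Your proof is correct and is exactly the paper's argument: the corollary is stated as an immediate combination of Proposition~\ref{prop:impGW} and Lemma~\ref{lem:impactGGuarantee}, and your two-line chain with $2^{p}+2^{p-1}=3\cdot 2^{p-1}$ is precisely that combination.
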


Now consider the interdiction set $R$ returned
by Algorithm~\ref{alg:mainAlg}.
If $c(R)=B$, Corollary~\ref{corr:impR} implies 
Theorem~\ref{thm:mainAlgThm}.
However, it may be that $c(R)$ only uses a very
small fraction of the available budget.
To prove Theorem~\ref{thm:mainAlgThm} we will show
how one can get around this problem by 
finding another efficient removal pattern $\mathcal{W}'$
that is over budget and whose value can be related
to $\val(R)$.
\ifbool{shortVersion}{Details on how this step can
be performed to finish the proof of Theorem~\ref{thm:mainAlgThm}
can be found in the long version of the paper.
}{
\begin{proof}[Proof of Theorem~\ref{thm:mainAlgThm}]
We will construct an efficient removal pattern
$\mathcal{W}'$ with corresponding removal set
$R'=R(\mathcal{W}')$ satisfying the following
two conditions:
\smallskip
\begin{compactenum}[(i)]
\item\label{item:RpB}
$c(R')\geq B$, and

\item\label{item:RtoRp}
$g^{\mathcal{W}}_p(V) \geq g^{\mathcal{W}'}_p(V) - 2^{p-1}$.
\end{compactenum}
\smallskip
First observe that the existence of $\mathcal{W}'$
indeed implies Theorem~\ref{thm:mainAlgThm} since
\begin{align*}
\val(R) &\geq
      g_{p}^{\mathcal{W}}(V) - 2^{p-1}
      && \text{(by Proposition~\ref{prop:impGW})}
  \\
     &\geq
       g_{p}^{\mathcal{W}'}(V) - 2^{p} 
       && \text{(using~\eqref{item:RtoRp})}
\\
     &\geq
   \frac{1}{2}\cdot \frac{c(R')}{c(U)}\val(U) - 2^{p+1}
       && \text{(by Lemma~\ref{lem:impactGGuarantee}
           applied to $\mathcal{W}'$)} \\
     &\geq
   \frac{1}{2}\cdot \frac{B}{c(U)}\val(U) - 2^{p+1}.
       && \text{(using~\eqref{item:RpB})}
\end{align*}

It remains to show that an 
efficient removal pattern
$\mathcal{W}'$ with the desired
properties~\eqref{item:RpB} and~\eqref{item:RtoRp}
exists.
We define $\mathcal{W}'$ in terms of $\mathcal{W}$.
Consider the construction of $\mathcal{W}$
through Algorithm~\ref{alg:mainAlg}.
Let $\ell\in \{-1,\dots, p-1\}$ be the last
iteration of the while loop of
Algorithm~\ref{alg:mainAlg} where the index $s$ was not
equal to $h$, i.e., the maximum possible value in that
iteration.
Hence, this corresponds to the lowest value of $\ell$
in which iteration we have $s\neq h$.
Note that there must have been an iteration where
$s\neq h$ since for otherwise $R=U$ which violates
the fact that $R$ is an interdiction set 
because $c(U)>B$.

Let $A\in \mathcal{A}_{\ell+1}$ be the set considered
by Algorithm~\ref{alg:mainAlg}
at the beginning of iteration $\ell$, and let
$\mathcal{C}_{\ell}(A)=\{Q_1,\dots, Q_h\}$ be the
numbering of the children of $A$ considered in that iteration.
Moreover, we denote by $\overline{\mathcal{W}}$
the set $\mathcal{W}$ at the beginning of iteration $\ell$.
We recall that $s$ is defined by 
\begin{equation*}
s =
  \max\left\{j\in \{0,\dots, h\}\;\Bigg\vert\;
     c\left(\overline{\mathcal{W}}\cup
         \{(Q_k, \ell) \mid k\in [j]\}
           \right) \leq B\right\}.
\end{equation*}
Let
\begin{equation*}
\mathcal{W}' = \overline{\mathcal{W}} \cup 
   \{(Q_k, \ell) \mid k\in [s+1]\}.
\end{equation*}
Clearly, $\mathcal{W}'$ is an efficient removal pattern.
Furthermore, the removal set $R'=R(\mathcal{W})$ satisfies
condition~\eqref{item:RpB}, i.e., $c(R')>B$, by definition
of $s$.
It remains to show that~\eqref{item:RtoRp} holds.

Notice that either $\ell=-1$, or all
children of $(Q_{s+1},\ell)$ are added to $\mathcal{W}$
as sets on level $\ell-1$, which follows from the
fact that 
$\ell$ was the last iteration of Algorithm~\ref{alg:mainAlg}
in which not all children were added to $\mathcal{W}$.
Moreover, $\mathcal{W}$ contains no sets on levels
$-1,\dots, \ell-2$: This clearly holds if $\ell=-1$;
otherwise, Algorithm~\ref{alg:mainAlg}
left the while loop after having added all children
of $Q_{s+1}$.
Hence, $\mathcal{W}$ and $\mathcal{W}'$ are almost identical
with the only difference that $\mathcal{W'}$ contains
$(Q_{s+1}, \ell)$, which is not contained in $\mathcal{W}$
and, if $\ell\neq -1$, then $\mathcal{W}$ contains
all children of $(Q_{s+1},\ell)$, which are not contained
in $\mathcal{W}'$.
This implies
\begin{align*}
g_p^{\mathcal{W}'}(V) = g_p^{\mathcal{W}}(V) + \max\{1,2^\ell\}.
\end{align*}
Point~\eqref{item:RtoRp} now follows by observing
that $\ell\leq p-1$ (and $p\geq 1$).

\end{proof}
}

\ifbool{shortVersion}{}{%
\section{An $O(1)$-approximation for metric TSP interdiction}
\label{sec:TSPInt}

We consider the metric TSP problem as highlighted in
Section~\ref{sec:intro}. We recall that
we are given an undirected connected
graph $G=(V,E)$ with edge lengths $\ell:E\rightarrow \mathbb{Z}_{>0}$
and the goal is to find a shortest closed walk that visits each vertex
at least once.
In its interdiction version, every edge is also given an
interdiction cost $c:E\rightarrow \mathbb{Z}_{>0}$, and there
is a global budget $B\in \mathbb{Z}_{> 0}$. The goal of
metric TSP interdiction is to find
a set $R\subseteq E$ of edges to interdict with $c(R)\leq B$, such
that the length of a shortest closed walk in $(V,E\setminus R)$
that visits each vertex at least once is as large as possible.

For any set $U\subseteq E$, we denote by $\TSP(U)$ the length
of a shortest closed walk in $(V,E\setminus U)$
visiting each vertex at least once.
To avoid trivial cases we assume that the graph cannot
be disconnected by removing
an interdiction set, i.e., for any 
$R\subseteq E$  with $c(R)\leq B$, the graph $(V,E\setminus R)$
is connected.
Formally, metric TSP interdiction can be described as follows:
\begin{equation}\label{eq:TSPInt}
\max \{ \TSP(R) \mid R\subseteq E, c(R)\leq B \}.
\end{equation}
The following result 
now easily follows by the fact that
$\MST(U)$ and $\TSP(U)$ are at most a factor of $2$ apart.

\begin{theorem}\label{thm:TSPtoMST}
Let $R\subseteq E$ be an interdiction set obtained by applying
an $\alpha$-approximation to the MST interdiction problem defined
on the graph $G$ with weights given by $\ell$, interdiction
costs given by $c$, and budget $B$.
Then $R$ is a $2\alpha$-approximation for metric TSP interdiction.
\end{theorem}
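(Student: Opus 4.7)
The plan is to exploit the classical fact that for any connected graph with non-negative edge weights, the weight of a minimum spanning tree and the length of a shortest closed walk visiting each vertex at least once differ by at most a factor of $2$. Concretely, for every $U\subseteq E$ such that $(V, E\setminus U)$ is connected, one has
\begin{equation*}
\val(U) \;\leq\; \TSP(U) \;\leq\; 2\cdot \val(U),
\end{equation*}
where $\val(U)$ denotes the weight of an MST of $(V, E\setminus U)$ with respect to $\ell$. The lower bound holds because any closed walk visiting all vertices contains a spanning connected subgraph, whose MST has weight at most the walk's length. The upper bound is the standard ``double the MST'' argument: duplicating every edge of an MST yields an Eulerian connected multigraph, and any Eulerian tour of it is a closed walk visiting each vertex at least once and of length exactly $2\cdot\val(U)$.

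Given this, the proof proceeds in two short steps. First, I would let $R^*\subseteq E$ be an optimal solution to the metric TSP interdiction problem~\eqref{eq:TSPInt}, so $\TSP(R^*)$ is the optimum value. Let also $R^{\mathrm{MST}}\subseteq E$ be an optimal solution to the associated MST interdiction instance, so $\val(R^{\mathrm{MST}}) \geq \val(R^*)$ by optimality over the same feasible set $\{R\subseteq E: c(R)\leq B\}$.

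Second, I would chain the inequalities for the $\alpha$-approximate MST interdiction solution $R$:
\begin{equation*}
\TSP(R) \;\geq\; \val(R) \;\geq\; \tfrac{1}{\alpha}\cdot \val(R^{\mathrm{MST}})
 \;\geq\; \tfrac{1}{\alpha}\cdot \val(R^*) \;\geq\; \tfrac{1}{2\alpha}\cdot \TSP(R^*),
\end{equation*}
where the first and last inequalities use the MST/TSP sandwich applied to $R$ and $R^*$ respectively, the second uses the $\alpha$-approximation guarantee, and the third uses that $R^{\mathrm{MST}}$ is optimal for MST interdiction. This yields the claimed $2\alpha$-approximation factor.

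There is no real obstacle here; the only point requiring a small sanity check is that the MST/TSP comparison is applied on graphs of the form $(V, E\setminus R)$ where $R$ is feasible, and the assumption that no feasible interdiction set disconnects the graph guarantees that $(V, E\setminus R)$ is connected, so both $\val(\cdot)$ and $\TSP(\cdot)$ are well-defined and the standard inequality applies.
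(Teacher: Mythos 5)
Your proof is correct and matches the paper's argument essentially verbatim: the same MST/TSP sandwich inequalities, the same comparison of the two optima over the identical feasible set, and the same four-step chain of inequalities. Nothing further is needed.
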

\begin{proof}
First observe that for any interdiction set $U\subseteq E$,
we have
\begin{equation}\label{eq:tspGeqMST}
\TSP(U) \geq \MST(U),
\end{equation}
because any solution to $\TSP(U)$ must connect all
vertices and therefore contains a spanning tree.
Furthermore, we also have for any interdiction set $U\subseteq E$,
\begin{equation}\label{eq:tspLeq2MST}
\TSP(U) \leq 2\MST(U),
\end{equation}
because doubling a spanning tree leads to a closed walk that
visits all vertices. This corresponds to the well-known
Double-Tree Algorithm which $2$-approximates metric
TSP (see~\cite{korte_2012_combinatorial}).
Let $R^*_{\MST}$ and $R^*_{\TSP}$ be optimal solutions
to the MST interdiction problem and the
metric TSP interdiction problem on $G$, respectively.
We thus obtain that our $\alpha$-approximation $R$ for
the MST interdiction problem satisfies
\begin{align*}
\TSP(R) &\geq \MST(R)
    && \text{(by~\eqref{eq:tspGeqMST})} \\
        &\geq \frac{1}{\alpha} \MST(R^*_{\MST})
    && \text{($R$ is an $\alpha$-approximation
          for MST interdiction)} \\
        &\geq \frac{1}{\alpha} \MST(R^*_{\TSP})
    && \text{($R^*_{\MST}$ is an optimal solution for MST interdiction)} \\
        &\geq \frac{1}{2\alpha} \TSP(R^*_{\TSP}).
    && \text{(by~\eqref{eq:tspLeq2MST})}
\end{align*}

\end{proof}

Finally, Theorem~\ref{thm:TSPmain} is a direct consequence of
Theorem~\ref{thm:TSPtoMST} and Theorem~\ref{thm:MSTmain},
our $14$-approximation for MST interdiction.

}

\section*{Acknowledgements}

We are grateful to Chandra Chekuri, R. Ravi, and 
the anonymous reviewers for many
helpful comments. 

\bibliographystyle{plain}

\begin{thebibliography}{10}

\bibitem{assimakopoulos_1987_network}
N.~Assimakopoulos.
\newblock A network interdiction model for hospital infection control.
\newblock {\em Computers in Biology and Medicine}, 17(6):413--422, 1987.

\bibitem{ball_1989_finding}
M.O. Ball, B.~Golden, and R.~V. Vohra.
\newblock Finding the most vital arcs in a network.
\newblock {\em Operations Research Letters}, 8:73--76, 1989.

\bibitem{bazgan_2011_most}
C.~Bazgan, S.~Toubaline, and Z.~Tuza.
\newblock The most vital nodes with respect to independent set and vertex
  cover.
\newblock {\em Discrete Applied Mathematics}, 159(17):1933 -- 1946, 2011.

\bibitem{bazgan_2012_efficient}
C.~Bazgan, S.~Toubaline, and D.~Vanderpooten.
\newblock Efficient determination of the k most vital edges for the minimum
  spanning tree problem.
\newblock {\em Computers and Operations Research}, 39(11):2888--2898, 2012.

\bibitem{burch_2002_decompositionbased}
C.~Burch, R.~Carr, S.~Krumke, M.~Marathe, C.~Phillips, and E.~Sundberg.
\newblock A decomposition-based pseudoapproximation algorithm for network flow
  inhibition.
\newblock In {\em Network Interdiction and Stochastic Integer Programming},
  chapter~3, pages 51--68. Springer, 2003.

\bibitem{chazelle_2000_minimum}
B.~Chazelle.
\newblock A minimum spanning tree algorithm with inverse-ackermann type
  complexity.
\newblock {\em Journal of the ACM}, 47(6):1028--1047, 2000.

\bibitem{chekuri_2006_steiner}
C.~Chekuri, S.~Guha, and J.~Naor.
\newblock The steiner $k$-cut problem.
\newblock {\em SIAM Journal on Discrete Mathematics}, 20(1):261--271, 2006.

\bibitem{church_2004_identifying}
R.~L. Church, M.~P. Scaparra, and R.~S. Middleton.
\newblock Identifying critical infrastructure: the median and covering facility
  interdiction problems.
\newblock {\em Annals of the Association of American Geographers},
  94(3):491--502, 2004.

\bibitem{dinitz_2013_packing}
M.~Dinitz and A.~Gupta.
\newblock Packing interdiction and partial covering problems.
\newblock In {\em Proceedings of 16th Conference on Integer Programming and
  Combinatorial Optimization (IPCO)}, pages 157--168, 2013.

\bibitem{dixon_1992_verification}
B.~Dixon, M.~Rauch, and R.~E. Tarjan.
\newblock Verification and sensitivity analysis of minimum spanning trees in
  linear time.
\newblock {\em SIAM Journal on Computing}, 21(6):1184--1192, 1992.

\bibitem{engelberg_2007_cut}
A.~Engelberg, J.~K\"onemann, S.~Leonardi, and J.~Naor.
\newblock Cut problems in graphs with a budget constraint.
\newblock {\em Journal of Discrete Algorithms}, 5:262--279, 2007.

\bibitem{fleischer_2003_push-relabel}
L.~Fleischer and S.~Iwata.
\newblock A push-relabel framework for submodular function minimization and
  applications to parametric optimization.
\newblock {\em Discrete Applied Mathematics}, 131(2):311 -- 322, 2003.

\bibitem{frederickson_1999_increasingC}
G.~N. Frederickson and R.~Solis-Oba.
\newblock Increasing the weight of minimum spanning trees.
\newblock {\em Journal of Algorithms}, 33:244--266, 1999.
\newblock Extended abstract appeared in \textit{Proceedings of the 7th Annual
  ACM-SIAM Symposium on Discrete Algorithms (SODA)}, 1996.

\bibitem{ghare_1971_optimal}
P.~M. Ghare, D.~C. Montgomery, and W.~C. Turner.
\newblock Optimal interdiction policy for a flow network.
\newblock {\em Naval Research Logistics Quarterly}, 18:37--45, 1971.

\bibitem{goldschmidt_1994_polynomial}
O.~Goldschmidt and D.~S. Hochbaum.
\newblock A polynomial algorithm for the k-cut problem for fixed k.
\newblock {\em Mathematics of Operations Research}, 19(1):24--37, 1994.

\bibitem{grandoni_2014_new}
F.~Grandoni, R.~Ravi, M.~Singh, and R.~Zenklusen.
\newblock New approaches to multi-objective optimization.
\newblock {\em Mathematical Programming, Series A}, 146(1):525--554, 2014.

\bibitem{guo_2014_parameterized}
J.~Guo and Y.~Shrestha.
\newblock Parameterized complexity of edge interdiction problems.
\newblock In {\em Proceedings of the 20th Computing and Combinatorics
  Conference (COCOON)}, pages 166--178, 2014.

\bibitem{hsu_1991_finding}
L.-H. Hsu, R.-H. Jan, Y.-C. Lee, C.-N. Hung, and M.-S. Chern.
\newblock Finding the most vital edge with respect to minimum spanning tree in
  weighted graphs.
\newblock {\em Information Processing Letters}, 39(5):277--281, 1991.

\bibitem{iwano_1993_efficient}
K.~Iwano and N.~Katoh.
\newblock Efficient algorithms for finding the most vital edge of a minimum
  spanning tree.
\newblock {\em Information Processing Letters}, 48(5):211 -- 213, 1993.

\bibitem{juettner_2006_budgeted}
A.~J{\"u}ttner.
\newblock On budgeted optimization problems.
\newblock {\em SIAM Journal on Discrete Mathematics}, 20(4):880--892, 2006.

\bibitem{karger_1995_randomizedLinear-Time}
D.~R. Karger, P.~N. Klein, and R.~E. Tarjan.
\newblock A randomized linear-time algorithm to find minimum spanning trees.
\newblock {\em Journal of the ACM}, 42(2):321--328, 1995.

\bibitem{khachiyan_2008_short}
L.~Khachiyan, E.~Boros, K.~Borys, K.~Elbassioni, V.~Gurvich, G.~Rudolf, and
  J.~Zhao.
\newblock On short paths interdiction problems: Total and node-wise limited
  interdiction.
\newblock {\em Theoretical Computer Science}, 43(2):204--233, 2008.

\bibitem{king_1997_simpler}
V.~King.
\newblock A simpler minimum spanning tree verification algorithm.
\newblock {\em Algorithmica}, 18(2):263--270, 1997.

\bibitem{klein_1994_randomized}
P.~N. Klein and R.~E. Tarjan.
\newblock A randomized linear-time algorithm for finding minimum spanning
  trees.
\newblock In {\em In Proceedings of the 26th Annual ACM Symposium on Theory of
  Computing (STOC)}, pages 9--15, 1994.

\bibitem{korte_2012_combinatorial}
B.~Korte and J.~Vygen.
\newblock {\em Combinatorial Optimization, Theory and Algorithms}.
\newblock Springer, 5th edition, 2012.

\bibitem{liang_2001_finding}
W.~Liang.
\newblock Finding the k most vital edges with respect to minimum spanning trees
  for fixed k.
\newblock {\em Discrete Applied Mathematics}, 113(2-3):319--327, 2001.

\bibitem{liang_1997_finding}
W.~Liang and X.~Shen.
\newblock Finding the k most vital edges in the minimum spanning tree problem.
\newblock {\em Parallel Computing}, 23(13):1889--1907, 1997.

\bibitem{morton_2007_models}
D.~P. Morton, F.~Pan, and K.~J. Saeger.
\newblock Models for nuclear smuggling interdiction.
\newblock {\em IEEE Transactions}, 39:3--14, 2007.

\bibitem{nagano_2007_faster}
K.~Nagano.
\newblock A faster parametric submodular function minimization algorithm and
  applications.
\newblock Technical report, University of Tokyo, 2007.
\newblock METR 2007-43.

\bibitem{naor_2001_tree}
J.~Naor and Y.~Rabani.
\newblock Tree packing and approximating k-cuts.
\newblock In {\em Proceedings of the 12th Annual ACM-SIAM Symposium on Discrete
  Algorithms (SODA)}, pages 26--27, 2001.

\bibitem{orlin_2009_faster}
J.~B. Orlin.
\newblock A faster strongly polynomial time algorithm for submodular function
  minimization.
\newblock {\em Mathematical Programming}, 118:237--251, 2009.

\bibitem{pettie_2005_sensitivity}
S.~Pettie.
\newblock Sensitivity analysis of minimum spanning trees in
  sub-inverse-{A}ckermann time.
\newblock In {\em Proceedings of the 16th International Symposium on Algorithms
  and Computation (ISAAC)}, pages 964--973. 2005.

\bibitem{phillips_1993_network}
C.~A. Phillips.
\newblock The network inhibition problem.
\newblock In {\em Proceedings of the 25th Annual ACM Symposium on Theory of
  Computing (STOC)}, pages 776--785, 1993.

\bibitem{ravi_1996_constrained}
R.~Ravi and M.~X. Goemans.
\newblock The constrained minimum spanning tree problem.
\newblock In {\em Proceedings of 5th Scandinavian Workshop on Algorithm Theory
  (SWAT)}, pages 66--75, 1996.

\bibitem{ravi_2008_approximating}
R.~Ravi and A.~Sinha.
\newblock Approximating $k$-cut using network strength as a {L}agrangean
  relaxation.
\newblock {\em European Journal of Operational Research}, 186:77--90, 2008.
\newblock 2008.

\bibitem{salmeron_2009_worst}
J.~Salmeron, K.~Wood, and R.~Baldick.
\newblock Worst-case interdiction analysis of large-scale electric power grids.
\newblock {\em IEEE Transactions on Power Systems}, 24(1):96--104, 2009.

\bibitem{schrijver_2002_history}
A.~Schrijver.
\newblock On the history of the transportation and maximum flow problems.
\newblock {\em Mathematical Programming}, 91(3):437--445, 2002.

\bibitem{schrijver_2003_combinatorial}
A.~Schrijver.
\newblock {\em Combinatorial Optimization, Polyhedra and Efficiency}.
\newblock Springer, 2003.

\bibitem{shen_1999_finding}
H.~Shen.
\newblock Finding the k most vital edges with respect to minimum spanning tree.
\newblock {\em Acta Informatica}, 36(5):405--424, 1999.

\bibitem{tarjan_1979_applications}
R.~E. Tarjan.
\newblock Applications of path compression on balanced trees.
\newblock {\em Journal of the ACM}, 26(4):690--715, 1979.

\bibitem{wood_1993_deterministic}
R.~K. Wood.
\newblock Deterministic network interdiction.
\newblock {\em Mathematical and Computer Modeling}, 17(2):1--18, 1993.

\bibitem{zenklusen_2010_matching}
R.~Zenklusen.
\newblock Matching interdiction.
\newblock {\em Discrete Applied Mathematics}, 158(15):1676--1690, 2010.

\bibitem{zenklusen_2010_network}
R.~Zenklusen.
\newblock Network flow interdiction on planar graphs.
\newblock {\em Discrete Applied Mathematics}, 158(13):1441--1455, 2010.

\bibitem{zenklusen_2014_connectivity}
R.~Zenklusen.
\newblock Connectivity interdiction.
\newblock {\em Operations Research Letters}, 42(6–7):450 -- 454, 2014.

\end{thebibliography}

\ifbool{shortVersion}{}{
\appendix

\section{Relation to graph disconnection problems}\label{app:graphPart}

The $k$-cut problem is closely related
to MST interdiction through its budgeted version, the
maximum components problem (MCP).
We recall that MCP
asks to break a graph
$G=(V,E)$ into a maximum number of connected components by removing
a given number $q$ of edges.
The following
is a simple way to reduce MCP  
to an MST interdiction problem: Set $c(e)=1, w(e)=0\; \forall e\in E$,
set the budget $B=q$,
and add to $G$ a set of $|V|-1$ edges $T$ forming a spanning tree;
for $e\in T$ we set $w(e)=1$ and make sure that these edges
cannot be interdicted by setting $c(e)=B+1$.
One can easily check that this reduction preserves
objective values.
Another reduction that does not preserve the objective
values has been presented in~\cite{frederickson_1999_increasingC}.
A generalization of MCP, where
edges have interdiction costs, was considered
in~\cite{engelberg_2007_cut} and called the
\emph{budgeted graph disconnection} (BGD) problem.
These budgeted versions of the $k$-cut problem admit
$O(1)$-approximations by extending
ideas for $O(1)$-approximations
for $k$-cut~\cite{frederickson_1999_increasingC,engelberg_2007_cut}.

\section{Proof of Lemma~\ref{lem:prefixDom}}
\label{app:prefixDom}

We start by observing that we can assume $b_j>0$
for $j\in [k]$. Otherwise one can remove all pairs
$a_j,b_j$ with $b_j=0$ from the sequence. Doing so
leads to a sharper statement since the left-hand side
of the inequality claimed by the lemma
decreases at most as much as its right-hand side.
Hence, assume $b_j>0$ for $j\in [k]$.

For brevity we define $r_j= \frac{a_j}{b_j}$
for $j\in [k]$. 
If $q=k$ and $\lambda=1$, the statement trivially holds.
Hence, assume that either $q<k$ or $\lambda<1$.
We define the following expressions $\beta$ and $\gamma$,
where the denominator
of $\gamma$ must be strictly positive since
either $q<k$ or $\lambda<1$:
\begin{align*}
\beta  &= \frac{\left(\sum_{j=1}^{q-1} b_j r_j\right)
  + \lambda b_q r_q}%
 {\left( \sum_{j=1}^{q-1} b_j\right) + \lambda b_q},\\
\gamma &=
    \frac{(1-\lambda)b_q r_q + \sum_{j=q+1}^k b_jr_j}%
  {(1-\lambda)b_q + \sum_{j=q+1}^k b_j}.
\end{align*}
Notice that $\beta$ can be interpreted as a convex
combination of $r_1,\dots r_{q}$, and since
$r_1\geq \dots \geq r_{q}$, we have
$\beta\geq r_q$. Similarly, $\gamma$ is a convex
combination of $r_q,\dots, r_k$, and hence
$\gamma \leq r_q$. Thus, $\beta \geq \gamma$.
The result now follows by 
\begin{align}
\frac{\sum_{j=1}^k a_j}{\sum_{j=1}^k b_j}
 &= \frac{\sum_{j=1}^k b_j r_j}{\sum_{j=1}^k b_j}%
  \nonumber \\
 &= \frac{1}{\sum_{j=1}^k b_j} \left[
    \left( \left(\sum_{j=1}^{q-1} b_j\right)
     + \lambda b_q \right)
    \beta
   + \left((1-\lambda)b_q + \sum_{j=q+1}^k b_j\right)
    \gamma
  \right]\nonumber\\
 &\leq \beta = \frac{\left(\sum_{j=1}^{q-1} a_j\right)
     + \lambda a_q}%
       {\left(\sum_{j=1}^{q-1} b_j \right)
     + \lambda b_q} \nonumber,
\end{align}
where the inequality follows by upper bounding
$\gamma$ by $\beta$.

\section{Details on erroneous claim in~\cite{shen_1999_finding}}
\label{app:shen}

The article~\cite{shen_1999_finding} presents several algorithms
for the $k$ most vital edges problem for MST. In particular,
they claim to present a $2$-approximation.
However, their results are based on an erroneous
claim about spanning trees, which is stated as
Lemma~2 in~\cite{shen_1999_finding}.
In this section, after introducing
some basic notions used in~\cite{shen_1999_finding},
we state Lemma~2 of~\cite{shen_1999_finding} and
provide a counterexample for
it. Furthermore, we give a brief explanation of why the proof
of Lemma~2 that is presented in~\cite{shen_1999_finding}
is erroneous.

Let $G=(V,E)$ be an undirected graph with edge weights
$w:E\rightarrow \mathbb{Z}_{\geq 0}$,
and let $k\in \mathbb{Z}_{>0}$.
All edge weights are assumed to be distinct, and hence,
the MST is unique, also in any connected subgraph of $G$.
Furthermore, we assume that $G$ is $(k+1)$-edge-connected
to avoid the trivial case that the graph can be disconnected.
Let $T\subseteq E$ be the unique MST in $G$.
For each $e\in T$, let
\begin{equation*}
R(e) = \{f\in E \mid (T\cup\{f\})\setminus \{e\} \text{ is
a spanning tree}\}.
\end{equation*}
In~\cite{shen_1999_finding}, the edges in $R(e)$ are
called \emph{replacement edges for $e$}
since they can replace
$e$ in $T$ to obtain again a spanning tree.
Furthermore $R_e\subseteq R(e)$ is the set containing
the $k$ lightest edges in $R(e)$, i.e., these are
the $k$ lightest replacement edges for $e$.
Moreover, let $R=\cup_{e\in T} R_e$.
We are now ready to state the erroneous lemma
in~\cite{shen_1999_finding}.
\begin{lem2shen*}
Let $K$ be an optimal solution for the $k$ most vital
edges problem for MST. Then
\begin{equation*}
K\subseteq  T \cup R.
\end{equation*}
\end{lem2shen*}

The weighted graph depicted in Figure~\ref{fig:counterShen}
is a counterexample to the above Lemma.
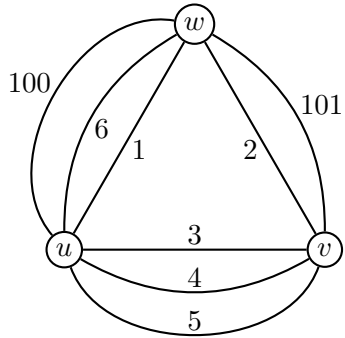
\begin{figure}[H]
\begin{center}
\begin{tikzpicture}

\begin{scope}[every node/.style={thick,draw=black,fill=white,circle,minimum size=13, inner sep=2pt}]
\def\rad{2cm}
\node (u) at (210:\rad) {$u$};
\node (v) at (-30:\rad) {$v$};
\node (w) at (90:\rad) {$w$};
\end{scope}

\begin{scope}[thick]
\draw (u) to[bend left=0pt] node[above=-2pt] {$3$} (v);
\draw (u) to[bend left=-30pt] node[above=-2pt] {$4$} (v);
\draw (u) to[bend left=-70pt] node[above=-2pt] {$5$} (v);

\draw (v) to[bend left=0pt] node[below left=-3pt] {$2$} (w);
\draw (v) to[bend left=-30pt] node[above right=-3pt] {$101$} (w);

\draw (w) to[bend left=0pt] node[below right=-3pt] {$1$} (u);
\draw (w) to[bend left=-30pt] node[below right=-3pt] {$6$} (u);
\draw (w) to[bend left=-70pt] node[above left=-3pt] {$100$} (u);

\end{scope}

\end{tikzpicture}

\end{center}
\caption{A counterexample to Lemma~2
in~\cite{shen_1999_finding} for $k=3$.}
\label{fig:counterShen}
\end{figure}
Its minimum spanning tree consists of the edges
of weight $1$ and $2$. For each of these edges,
the three best replacement edges are the edges
of weight $3$, $4$, and $5$.
No matter which three edges are removed among
the edges of weight $1$, $2$, $3$, $4$, and $5$, there
is always a spanning tree left that uses neither of
the two edges of weight $100$ and $101$, respectively.
However, removing the edges of weight $1$,$2$, and $6$,
leads to a graph whose minimum spanning tree contains
the edge of weight $100$.

Notice that the example in Figure~\ref{fig:counterShen} can easily
be converted to a simple graph (i.e., without parallel edges).
For example, this can be done by replacing each of the three
vertices by a clique of size $5$, where all edges in the clique
have very low weight and thus are not worth being removed; because
no matter which $3$ edges get removed, the vertices of any clique
can still be connected by low weight edges within the clique.
Each remaining edge connects the two cliques that correspond to
its endpoints, where it does not matter to which particular
vertex of a clique an edge is connected to, as long as no
parallel edges are created. Clearly, the edges
can be placed in a way to obtain a simple graph.

The main mistake in the proof of Lemma~2 presented
in~\cite{shen_1999_finding} is the assumption that
for any subset $U\subseteq T$,
one can simultaneously replace in $T$ each edge $e\in U$ by
an edge in $R(e)$, still obtaining a spanning tree.

}

\end{document}